\documentclass[11pt,a4paper]{article}
\usepackage[T1]{fontenc}
\usepackage[utf8]{inputenc}
\usepackage[english]{babel}
\usepackage{lmodern}
\usepackage{microtype}
\usepackage{geometry}
\usepackage{amsmath,amssymb,amsfonts,amsthm,mathtools,bm}
\usepackage{mathrsfs}
\usepackage{booktabs}
\usepackage{array}
\usepackage{enumitem}
\usepackage[hidelinks]{hyperref}
\usepackage{natbib}
\usepackage[title,titletoc]{appendix}

\newcommand{\R}{\mathbb{R}}
\newcommand{\N}{\mathbb{N}}
\newcommand{\Pp}{\mathbb{P}}
\newcommand{\E}{\mathbb{E}}
\newcommand{\Var}{\operatorname{Var}}
\newcommand{\Cov}{\operatorname{Cov}}

\newcommand{\op}{\operatorname{op}}

\newcommand{\Ical}{\mathcal I}
\newcommand{\Gammaop}{\boldsymbol{\Gamma}}
\newcommand{\Thetaop}{\boldsymbol{\Theta}}
\newcommand{\etaV}{\boldsymbol{\eta}}
\newcommand{\norm}[1]{\left\lVert#1\right\rVert}

\newtheorem{assumption}{Assumption}[section]
\newtheorem{theorem}{Theorem}[section]
\newtheorem{lemma}{Lemma}[section]
\newtheorem{corollary}{Corollary}[section]
\newtheorem{proposition}{Proposition}[section]
\theoremstyle{remark}
\newtheorem{remark}{Remark}[section]
\theoremstyle{definition}
\newtheorem{definition}{Definition}[section]

\begin{document}

\title{\textbf{Testing for point-impact effects in a spatial semi-functional linear regression model}}
\author{Stéphane Bouka\thanks{Corresponding author. Email: stephane.bouka@univ-masuku.org}
\and Emmanuel De Dieu Nkou
\and Alban Mbina Mbina
\and Guy Martial Nkiet\\[1mm]
\small Laboratoire de Probabilités, Statistique et Informatique,\\
\small Unité de Recherche en Mathématiques et Informatique,\\
\small Université des Sciences et Techniques de Masuku, BP 813, Franceville, Gabon}
\date{}
\maketitle

\begin{abstract}
This paper introduces a novel semiparametric testing framework for localized point-impact effects within a spatial semi-functional linear regression model. We address an intricate inferential problem that simultaneously accommodates three distinct sources of complexity: an infinite-dimensional global functional slope, an unknown smooth spatial nuisance surface, and data-dependent, unobserved impact locations. To isolate the point-impact coefficients, we construct a robust residualized cross-moment estimator after regularizing the functional component via Tikhonov inverse and estimating the spatial nuisance surface through local-linear smoothing. We establish a multivariate central limit theorem under the null hypothesis of no point-impact effect. We develop two operational test statistics: a general formulation whitened by the long-run covariance matrix to handle arbitrary spatial dependence, and a simpler chi-square calibration valid under short-range spatial dependencies in regression errors. Rigorous finite-sample simulations and an application to Canadian weather data demonstrate the superior power and size control of our methodology.
\end{abstract}

\noindent\textbf{Keywords :} Random fields CLT, Long-run covariance estimation, Semiparametric inference, Local polynomial smoothing, Data-dependent locations.

\section{Introduction}
\noindent Functional Data Analysis (FDA) has established a comprehensive paradigm for modeling complex observations where each data point is a continuous curve or trajectory rather than a finite-dimensional vector \citep{ramsay}. Within this field, the functional linear regression model represents a cornerstone for relating an entire trajectory to a scalar response variable \citep{bosq, cardot, ramsay, mas_pumo09}. However, traditional functional regression models rely heavily on the assumption of independent and identically distributed (i.i.d.) observations, which is frequently violated in practice, particularly when the data are gathered at specific geographical coordinates.

\noindent When curves are sampled across spatial domains, incorporating spatial dependence into the underlying statistical framework becomes mandatory. Consequently, spatial functional regression has emerged as an active area of contemporary research. Significant advances have been made in geostatistical functional prediction and spatial functional linear modeling \citep{giraldoetal12, giraldoetal18, boukaetal23, beyaztas26}. Despite these contributions, existing spatial functional models generally assume that the covariate acts uniformly on the response via an integral operator over the entire domain.

\noindent A critical extension to this structure is the point-impact model. In many real-world applications, localized features or specific instantaneous evaluations of a trajectory carry distinct information that is inherently smoothed out or obscured by a global coefficient function. For example, in econometric or environmental studies, specific shocks or extreme values at precise time points can yield substantial localized impacts on the response. While estimation and inference for point impacts have recently been explored in non-spatial, i.i.d.\ functional settings \citep{Poss2020, liebletal,shirvanietal2024}, the intersection of point-impact modeling and spatial dependence remains unaddressed. This paper fills this gap by introducing a point-impact testing framework within a spatial semi-functional linear regression model that accommodates an unknown spatial nuisance surface. 

\noindent The testing problem is especially delicate when the locations of the points of impact are unknown, because their estimation introduces an additional stochastic perturbation into the test statistic. The spatial setting introduces a further difficulty. A spatially smooth component may generate systematic variation in the response that is unrelated to the point-impact mechanism. If this component is ignored, a test for point impacts may attribute spatial heterogeneity to localized functional effects. The problem considered here therefore combines three sources of complexity: an infinite-dimensional functional slope, an unknown spatial nuisance surface, and data-dependent impact locations. The resulting model is
\begin{equation}
\label{eq:model}
Y_{\bm i}=\langle\Psi,X_{\bm i}\rangle_G+\etaV^\top X_{\bm i,\tau}+r(s_{\bm i})+\xi_{\bm i},
\qquad \bm i\in\Ical_{\bm n},
\end{equation}
where $G=L^2[0,1]$, $\Psi\in G$ is an unknown functional slope, $r$ is an unknown smooth spatial nuisance function, $Y_{\bm i}$ is a real random variable, $X_{\bm i}$ is a functional random variable, $\etaV=(\eta_1,\ldots,\eta_A)^\top$, and the impact locations $\tau_1,\ldots,\tau_A$ are unknown. The spatial sites form a rectangular lattice and $N_n=\prod_{k=1}^d n_k$ denotes the number of observations, with $d=2,3$.

\noindent Our inferential target is
\begin{equation}
\label{eq:hypothesis}
H_0:\etaV=0\qquad\text{versus}\qquad H_1:\etaV\ne0.
\end{equation}
Thus, rejection of $H_0$ means that selected point evaluations contain additional linear information about the response after the global functional effect and smooth spatial heterogeneity have been accounted for. The interpretation is associational rather than causal. Note that under $H_0$, model \eqref{eq:model} is identical to that used in \citet{bouka5}.

\noindent The contribution of this paper is not the introduction of point-impact
functional regression per se. Rather, we develop a spatially dependent
semiparametric testing framework in which localized functional effects are
tested after simultaneously removing the contribution of the complete
functional trajectory and an unknown smooth spatial nuisance component. 

\noindent Precisely, its main contribution is a residualized point-evaluation score whose limiting distribution is established under spatial dependence. We first identify the point-impact coefficient through a population residualization argument. We then estimate the infinite-dimensional component by Tikhonov regularization and the spatial nuisance by local-linear smoothing. The impact locations are estimated separately; their effect on the score is controlled through an explicit rate condition. The central limit theorem is obtained by applying a stationary random-field CLT \citep{Bolthausen1982} to scalar projections of a precisely defined $A$-dimensional score field and then invoking the Cramér--Wold device.

\noindent The resulting test is formulated in two versions. The first uses a consistent estimator of the long-run covariance and is appropriate for a generally spatially dependent score. The second is a simpler chi-square statistic under the additional condition $\Omega_\tau=\sigma^2\Thetaop_\tau$. We emphasize that the latter condition is substantive and cannot be omitted when spatial dependence is present. We also clarify that, for fixed $A$, the statistic $(D_n-A)/\sqrt A$ has a standardized chi-square limit rather than a Gaussian limit.

\noindent The remainder of the paper is organized as follows. Section~\ref{sec:method} presents the population identification argument, estimation procedure, residualized score and test statistics. Section~\ref{sec:assumptions} states the assumptions and main asymptotic results. Section~\ref{sec:simulation} reports the simulations. Section~\ref{sec:realdata} gives the Canadian weather application. Section~\ref{sec:conclusion} concludes, and Section~\ref{sec:proofs} gives the proofs.

\section{Estimation and testing methodology}
\label{sec:method}

\subsection{Spatial lattice and functional notation}
Let
\[
\Ical_{\bm n}=\{\bm i=(i_1,\ldots,i_d)\in\N^d:1\le i_k\le n_k,\ k=1,\ldots,d\}\subset\mathbb{Z}^d,
\qquad
N_n=\prod_{k=1}^d n_k.
\]
For $\bm i=(i_1,\ldots,i_d)\in\Ical_{\bm n}$, $s_{\bm i}
=
\left(\frac{i_1}{n_1+1},\ldots,\frac{i_d}{n_d+1}\right)^\top\in [0,1]^d$. We use the supremum norm on $\mathbb Z^d$,
\[
\norm{\bm i-\bm j}_\infty=\max_{1\leq k\leq d}|i_k-j_k|.
\]
The functional covariates are random elements of $G=L^2[0,1]$, equipped with the inner product
\[
\langle f,g\rangle_G=\int_0^1f(t)g(t)\,dt
\]
and norm $\norm{f}=\langle f,f\rangle_G^{1/2}$.

\begin{definition}[Vector-valued score field]
\label{def:scorefield}
Let $\tau=(\tau_1,\ldots,\tau_A)\in[0,1]^A$ be the vector of impact locations. For each $\bm i\in\mathbb Z^d$, define
\[
X_{\bm i,\tau}
=
\bigl(X_{\bm i}(\tau_1),\ldots,X_{\bm i}(\tau_A)\bigr)^\top\in\R^A
\]
and
\[
Z_{\bm i}
=
\xi_{\bm i}X_{\bm i,\tau}
=
\bigl(\xi_{\bm i}X_{\bm i}(\tau_1),\ldots,
\xi_{\bm i}X_{\bm i}(\tau_A)\bigr)^\top\in\R^A.
\]
The collection
\[
\{Z_{\bm i}:\bm i\in\mathbb Z^d\}
\]
is called the \emph{$A$-dimensional vector-valued score field}. Its covariance matrix at lag $\bm h\in\mathbb Z^d$ is
\[
\Gamma_\tau(\bm h)=\Cov(Z_{\bm0},Z_{\bm h})\in\R^{A\times A},
\]
with entries
\[
[\Gamma_\tau(\bm h)]_{k\ell}
=
\Cov\bigl(\xi_{\bm0}X_{\bm0}(\tau_k),
\xi_{\bm h}X_{\bm h}(\tau_\ell)\bigr).
\]
Whenever the series is absolutely convergent, its long-run covariance matrix is
\[
\Omega_\tau
=
\sum_{\bm h\in\mathbb Z^d}\Gamma_\tau(\bm h).
\]
\end{definition}

For any $a\in\R^A$, the scalar projection of the score field is
\begin{equation}
\label{eq:scalarprojection}
Z_{\bm i}^{(a)}=a^\top Z_{\bm i}=\xi_{\bm i}a^\top X_{\bm i,\tau}.
\end{equation}
This scalar field is the object to which the random-field central limit theorem will be applied in the proof of Theorem~\ref{thm:clt}. The vector-valued formulation is recovered through the Cramér--Wold device.

\subsection{Population identification}
Let $X$ denote a generic centered copy of the functional covariate and let
\[
X_\tau=(X(\tau_1),\ldots,X(\tau_A))^\top.
\]
Define the covariance operator
\[
\Gammaop=\E(X\otimes X),
\]
the point-impact covariance matrix
\[
\Thetaop_\tau=\Cov(X_\tau),
\]
and the cross-covariance operator
\[
K_{X_\tau X}=\E(X_\tau\otimes X),
\]
which maps $G$ into $\R^A$. We also define
\[
K_{YX}=\E(YX)\in G,
\qquad
K_{YX_\tau}=\E(YX_\tau)\in\R^A.
\]
After removing the spatial mean component, the model implies the population moment equations
\begin{equation}
\label{eq:moments}
K_{YX}=\Gammaop\Psi+K_{X_\tau X}^{\top}\etaV,
\qquad
K_{YX_\tau}=K_{X_\tau X}\Psi+\Thetaop_\tau\etaV.
\end{equation}

If $\Gammaop^{-1}$ exists on the relevant subspace, substitution gives
\begin{equation}
\label{eq:etaid}
\etaV=\Lambda_\tau^{-1}
\left(K_{YX_\tau}-K_{X_\tau X}\Gammaop^{-1}K_{YX}\right),
\end{equation}
where
\begin{equation}
\label{eq:Lambda}
\Lambda_\tau
=\Thetaop_\tau-K_{X_\tau X}\Gammaop^{-1}K_{X_\tau X}^{\top}.
\end{equation}
When $\Gammaop$ is compact and injective, the inverse in these expressions is understood on the relevant range, or as the corresponding generalized inverse. The matrix $\Lambda_\tau$ is the finite-dimensional Schur complement associated with residualizing the point evaluations with respect to the full trajectory. Its nonsingularity identifies the point-impact coefficient.

\subsection{Regularized estimation} 

Let
\[
\Gammaop_n
=
\frac1{N_n}\sum_{\bm i\in\Ical_{\bm n}}
X_{\bm i}\otimes_G X_{\bm i},
\qquad
\widehat{\Theta}_{\widehat\tau,n}
=
\dfrac1{N_n}\sum_{\bm i\in\Ical_{\bm n}}
X_{\bm i,\widehat\tau}\otimes X_{\bm i,\widehat\tau},
\]
\[	K_{YX,n}=\dfrac1{N_n}\sum_{\bm i\in\Ical_{\bm n}}Y_{\bm{i}}X_{\bm{i}},\qquad
K_{YX_{\widehat{\tau}},n}=\dfrac1{N_n}\sum_{\bm i\in\Ical_{\bm n}}Y_{\bm i} X_{\bm i,\widehat\tau},\]
\[	K_{XX_{\widehat{\tau}}, n}=\dfrac1{N_n}\sum_{\bm i\in\Ical_{\bm n}}X_{\bm{i}}\otimes_{G}X_{\bm{i},\widehat{\tau}},\qquad
K_{X_{\widehat{\tau}}X, n}=\dfrac1{N_n}\sum_{\bm i\in\Ical_{\bm n}}X_{\bm{i},\widehat{\tau}}\otimes X_{\bm{i}}\]
where $u\otimes_G vh=\left\langle u,h\right\rangle_Gv$ and  $\bm{u}\otimes \bm{v}\bm{h}=(\bm{h}^\top\bm{u})\bm{v}$. We write $N_n\rightarrow +\infty$ if $\min_{1\le k\le d}(n_k)\longrightarrow +\infty$ and we assume that $\dfrac{n_j}{n_k}\le C$ for $1\le j,\ k\le d$ and $0<C<+\infty$. Because empirical covariance operators are typically ill-conditioned in functional problems, we use the Tikhonov-regularized inverse
\[
\widehat\Gamma_n^{-1}
=
(\Gammaop_n+\omega_n I_G)^{-1},
\]
where $\omega_n\downarrow0$ and $I_G$ denotes the identity operator.

Define
\[
\Lambda_{\Psi,n}
=
\Gammaop_n-
K_{X_{\widehat\tau}X,n}^\top
\widehat\Theta_{\widehat\tau,n}^{-1}
K_{XX_{\widehat\tau},n},
\]
The corresponding regularized estimators are
\begin{align}
\widehat\Psi_n
&=
(\Lambda_{\Psi,n}+U_nI_G)^{-1}
\left[
K_{YX,n}
-
K_{X_{\widehat\tau}X,n}^\top
\widehat\Theta_{\widehat\tau,n}^{-1}
K_{YX_{\widehat\tau},n}
\right],
\label{eq:psi-hat}\\
\widehat\etaV_n
&=
\left(\widehat\Theta_{\widehat\tau,n}
-
K_{XX_{\widehat\tau},n}^\top
\widehat\Gamma_n^{-1}
K_{X_{\widehat\tau}X,n}\right)^{-1}
\left(
K_{YX_{\widehat\tau},n}
-
K_{XX_{\widehat\tau},n}^\top
\widehat\Gamma_n^{-1}
K_{YX,n}
\right).
\label{eq:eta-hat}
\end{align}
Here $U_n\downarrow0$ is an additional regularization sequence \citep{mas_pumo09}. The impact times $\widehat\tau$ and their number $\widehat A$ are obtained from a point-impact estimation procedure such as that of \citet{liebletal}, applied to the present residualized model. 


\subsection{Estimation of the spatial nuisance}
Let $\widehat\tau=(\widehat\tau_1,\ldots,\widehat\tau_{\widehat A})$ denote an estimator of the impact locations. For preliminary estimation of the spatial nuisance, define
\begin{equation}
\label{eq:T}
T_{\bm i}
=Y_{\bm i}-\langle\widehat\Psi_n,X_{\bm i}\rangle_G
-\widehat\etaV_n^\top X_{\bm i,\widehat\tau}.
\end{equation}
Then
\[
T_{\bm i}
=
r(s_{\bm i})+V_{\bm i},
\]
where $V_{\bm i}$ contains the regression error and the estimation errors from \eqref{eq:psi-hat}--\eqref{eq:eta-hat}.

For a target location $s_0\in[0,1]^d$, approximate
\[
r(s)\approx r(s_0)+\nabla r(s_0)^\top(s-s_0)
\]
and use the local-linear estimator
\begin{equation}\label{eq:rhat}
\widehat r(s_0)
=
e_1^\top
\left(\dfrac{1}{N_n}
\bm{X}_{s_0}^\top W_{s_0}\bm{X}_{s_0}
\right)^{-1}
\left(\dfrac{1}{N_n}\bm{X}_{s_0}^\top W_{s_0}T\right),
\end{equation}
where $e_1=(1,0,\ldots,0)^\top$, $T=(T_{{\bm i}_1},\cdots,T_{{\bm i}_{N_n}})$ ranged in lexicographical order, the rows of the matrix $\bm{X}_{s_0}$ are
\[
\left(1,\frac{s_{\bm i_j}- s_0}{H_n}\right),\qquad j=1,\cdots,N_n,
\]
and the diagonal entries of $W_{s_0}$ are
\[
\dfrac{1}{H_n^{d}}\nu\!\left(\frac{s_{\bm i_j}-s_0}{H_n}\right),
\]
where $\nu:\R^{d}\longrightarrow\R_{+}$ is a kernel function and $H_n$ is a bandwidth.

Local-linear smoothing is used because it provides standard bias control, including at boundaries; see \citet{hallinetal04,fan_gijbels}.

The quantity in \eqref{eq:T} is used only for estimating the nuisance function. It must not be used as the final residual in the point-impact test, because subtracting the point-impact term would remove precisely the effect being tested.

\subsection{Residualized score}
For the testing step define
\begin{equation}
\label{eq:W}
W_{\bm i,n}
=Y_{\bm i}-\langle\widehat\Psi_n,X_{\bm i}\rangle_G-\widehat r(s_{\bm i}).
\end{equation}
Using the model,
\begin{equation}
\label{eq:decomp}
W_{\bm i,n}
=\etaV^\top X_{\bm i,\tau}+L_{\bm i,n},
\end{equation}
where
\begin{equation}
\label{eq:L}
L_{\bm i,n}
=\xi_{\bm i}
+\langle\Psi-\widehat\Psi_n,X_{\bm i}\rangle_G
+r(s_{\bm i})-\widehat r(s_{\bm i}).
\end{equation}
The residualized point-evaluation score is
\begin{equation}
\label{eq:score}
\widehat K_n
=\frac1{N_n}\sum_{\bm i\in\Ical_{\bm n}}
W_{\bm i,n}X_{\bm i,\widehat\tau}.
\end{equation}
Under suitable regularity conditions,
\begin{align}
\label{eq:scoremean}
\widehat K_n
&=\Thetaop_\tau\etaV +o_p(1)\\
\sqrt{N_n}\,\widehat K_n
&=\Thetaop_\tau\etaV +  \frac1{\sqrt{N_n}}
\sum_{\bm i\in\Ical_{\bm n}}
\xi_{\bm i}X_{\bm i,\tau} + o_p(1).\label{scoreop}
\end{align}
Under $H_0$, the stronger score-scale statement needed for the central limit theorem is
\begin{equation}
\label{eq:scoreop}
\frac1{\sqrt{N_n}}
\sum_{\bm i\in\Ical_{\bm n}}
\xi_{\bm i}\left(X_{\bm i,\widehat\tau} - X_{\bm i,\tau}\right) + \frac1{\sqrt{N_n}}
\sum_{\bm i\in\Ical_{\bm n}}
L^{0}_{\bm i,n}X_{\bm i,\widehat\tau}=o_p(1),
\end{equation}
where
\[L^{0}_{\bm i,n}=\langle\Psi-\widehat\Psi_n,X_{\bm i}\rangle_G
+r(s_{\bm i})-\widehat r(s_{\bm i}).\]
The role of the assumptions below is to make explicit the regularity and rate conditions under which \eqref{eq:scoremean}-\eqref{scoreop} hold and
\[\frac1{\sqrt{N_n}}
\sum_{\bm i\in\Ical_{\bm n}}
\xi_{\bm i}X_{\bm i,\tau}\stackrel{\mathcal{D}}{\longrightarrow}\mathcal{N}_A(0,\Omega_\tau).\]

\subsection{Test statistics}
The general spatially dependent covariance of the leading score is the long-run covariance
\begin{equation}
\label{eq:Omega}
\Omega_\tau
=\sum_{\bm h\in\mathbb Z^d}
\Cov\left(\xi_{\bm0}X_{\bm0,\tau},
\xi_{\bm h}X_{\bm h,\tau}\right),
\end{equation}
provided the series is absolutely convergent. Let $\widehat\Omega_n$ be a consistent estimator of $\Omega_\tau$. The general test statistic is
\begin{equation}
\label{eq:Dlr}
D_n^{\mathrm{LR}}
=N_n\widehat K_n^\top
\widehat\Omega_n^{-1}
\widehat K_n.
\end{equation}
\begin{remark}
The operational implementation of the general statistic $D_n^{\text{LR}}$ requires a consistent estimator $\widehat{\Omega}_n$ of the long-run covariance matrix. Following standard practices in spatial econometrics and random fields literature, such an estimator can be constructed using a spatial kernel-based (HAC) approach or a spatial block-resampling method (see, e.g., \citet{hallinetal04} or \cite{daboniangetal16}). Specifically, one can define $\widehat{\Omega}_n = \sum_{\bm h \in \mathcal{I}_{\bm n}} K(\bm h/b_n) \widehat{\Gamma}_\tau(\bm h)$, where $\widehat{\Gamma}_\tau(\bm h)$ represents the empirical spatial cross-covariance at lag $\bm h$, $K(\cdot)$ is a suitable multivariate kernel function, and $b_n$ is a bandwidth sequence matrix controlling the truncation lag. While the optimal data-driven choice of $b_n$ under functional spatial setups constitutes an interesting venue for future research, standard rectangular or Parzen windows satisfy $\Vert \widehat{\Omega}_n - \Omega_\tau \Vert_{\text{op}} = o_p(1)$ under the mixing rates imposed in Assumption 3.1.
\end{remark}

A simpler calibration is available under
\begin{equation}
\label{eq:special}
\Omega_\tau=\sigma^2\Thetaop_\tau,
\qquad
\Thetaop_\tau=\Cov(X_\tau).
\end{equation}
For example, this identity holds when the errors are independent of the functional covariates and spatially uncorrelated or spatially correlated with a short-range correlation (see  \citet{francisco05}) under the centering conditions used here. Under \eqref{eq:special}, if $\widehat\sigma_n^2$ consistently estimates $\sigma^2$ and $\widehat\Theta_{\widehat\tau,n}$ consistently estimates $\Thetaop_\tau$, we use
\begin{equation}
\label{eq:D}
D_n
=\frac{N_n}{\widehat\sigma_n^2}
\widehat K_n^\top
\widehat\Theta_{\widehat\tau,n}^{-1}
\widehat K_n.
\end{equation}
The distinction between \eqref{eq:Dlr} and \eqref{eq:D} is substantive: a spatially correlated score generally requires long-run covariance normalization.

For fixed $A$, one may report the affine standardization
\begin{equation}
\label{eq:S}
S_n=\frac{D_n-A}{\sqrt A}.
\end{equation}
This is not a distinct testing procedure when critical values are transformed consistently.

\section{Assumptions and main results}
\label{sec:assumptions}

\subsection{Assumptions}

The following assumptions are stated directly in terms of the objects used in the proofs. In particular, the vector-valued score field in Definition~\ref{def:scorefield} is separated from the original functional-data field, which avoids ambiguity about the mixing condition used by the central limit theorem.

\begin{assumption}[Spatial mixing]
\label{as:mixing}
Let $\{Z_{\bm i}\}_{\bm i\in\mathbb Z^d}$ be the vector-valued score field of Definition~\ref{def:scorefield} and assume it is strictly stationary. For $S\subset\mathbb Z^d$, let
\[
\mathcal F_S=\sigma(Z_{\bm i}:\bm i\in S).
\]
For $S,T\subset\mathbb Z^d$, define
\[
\operatorname{dist}(S,T)
=\inf\{\norm{\bm i-\bm j}_\infty:\bm i\in S,\ \bm j\in T\}.
\]
For $k,\ell\in\N\cup\{\infty\}$ and $u\ge1$, define the finite-set mixing coefficients
\[
\alpha_{k,\ell}(u)
=
\sup\left\{
\alpha(\mathcal F_S,\mathcal F_T):
|S|\le k,\ |T|\le\ell,\ \operatorname{dist}(S,T)\ge u
\right\},
\]
where
\[
\alpha(\mathcal F_S,\mathcal F_T)
=\sup_{A\in\mathcal F_S,\,B\in\mathcal F_T}
\left|\Pp(A\cap B)-\Pp(A)\Pp(B)\right|.
\]
Assume that, for every $k,\ell\ge1$ with $k+\ell\le4$,
\begin{equation}
\label{A1}
\sum_{u=1}^{\infty}u^{d-1}\alpha_{k,\ell}(u)<\infty,
\end{equation}
and, in addition,
\begin{equation}
\label{A2}
\alpha_{1,\infty}(u)=O(u^{-\theta}),
\qquad \theta>8d.
\end{equation}
Finally, for every $a\in\R^A$,
\begin{equation}
\label{A3}
\E|a^\top Z_{\bm0}|^8<\infty,
\end{equation}
and $\Omega_\tau$ is well defined and positive definite.
\end{assumption}

\begin{remark}
The finite-set summability conditions in \eqref{A1} are stated explicitly because they are the quantities entering standard random-field CLTs \citep{Bolthausen1982}. The polynomial condition \eqref{A2} is a convenient strong rate condition. We do not rely on an unproved identification of all finite-set coefficients with a single coefficient; rather, the conditions needed by the CLT are imposed explicitly. For the scalar projection $Z_{\bm i}^{(a)}$, the eighth-moment assumption corresponds to the usual $2+\delta$ condition with $\delta=6$. The associated summability requirement involves $\alpha_{1,1}(u)^{6/8}$ and is satisfied under a polynomial decay of order $\theta>8d$.
\end{remark}

\begin{assumption}[Error field and exogeneity]
\label{as:error}
The error field $\{\xi_{\bm i}\}$ is centered and strictly stationary with
\[
\E\left(\xi_{\bm i}\right)=0,
\qquad
\Var(\xi_{\bm i})=\sigma^2<\infty,
\qquad
\E\left(|\xi_{\bm i}|^8\right)<\infty,\qquad 
Cov(\xi_{\bm{i}},\xi_{\bm{j}})=\sigma^2\varphi(\norm{\bm{i}-\bm{j}}_\infty),
\]
where $\varphi(0)=1$, $\varphi(t)\searrow 0$ as $t\rightarrow+\infty$, $\sum_{t=1}^{+\infty}t^{d-1}\varphi(t)<\infty$ and satisfies the exogeneity condition
\begin{equation}
\label{eq:exog}
\E(\xi_{\bm i}X_{\bm i,\tau})=0_A.
\end{equation}
For the simplified statistic \eqref{eq:D}, condition \eqref{eq:special} is additionally assumed.
\end{assumption}

\begin{assumption}[Functional covariate regularity]
\label{as:functional}
The random process
$\left\{(X_{\bm i}, Y_{\bm i})\right\}_{\bm i\in\mathbb{Z}^d}$ is centered, strictly stationary, $\alpha$-mixing dependent and satisfies
\[
\norm{X_{\bm0}}_G\le C_X\quad\text{a.s.}
\]
The covariance operator $\Gammaop$ is injective on the relevant subspace. The point evaluations are almost surely bounded, and the sample paths satisfy the condition stated in Assumption~\ref{as:impact}.
\end{assumption}

\begin{assumption}[Point-impact identifiability]
\label{as:identification}
The number $A$ of impact points is fixed. The matrix $\Thetaop_\tau$ is positive definite and
\[
\lambda_{\min}(\Thetaop_\tau)\ge c_\Theta>0
\]
on the parameter configurations under consideration. In addition, the Schur complement $\Lambda_\tau$ in \eqref{eq:Lambda} is nonsingular whenever population identification through residualization is invoked.
\end{assumption}

\begin{assumption}[Spatial nuisance]
\label{as:nuisance}
The nuisance function belongs to a bounded Hölder class $\Sigma(\beta,L,M)$ with
\[
\beta>d/4.
\]
If $\beta>1$, its first derivatives are bounded. The kernel $K$ is symmetric, bounded, Lipschitz, compactly supported on $[-1,1]^d$, and integrates to one. Its moments are such that the local-linear bias has the standard order associated with smoothness $\beta$. The bandwidth satisfies
\[
H_n\to0,
\qquad
N_nH_n^d\to\infty.
\]
The bandwidth is chosen so that the local-linear smoothing bias and stochastic contribution to the residualized score are negligible at the $N_n^{-1/2}$ scale.
\end{assumption}


\begin{assumption}[Estimated impact locations]
\label{as:impact}
The number $A$ is fixed and
\begin{equation}
\label{eq:Ahat}
\mathbb{P}(\widehat A=A)\to1\qquad\text{as}\qquad \bm n\to+\infty,
\end{equation}
and for all $\ell=1,\cdots,A$
\begin{equation}
\label{eq:impactrate}
\frac1{N_n}\sum_{\bm i\in\Ical_{\bm n}}
\left(X_{\bm i}(\widehat\tau_{\ell}) - X_{\bm i}(\tau_{\ell})\right)^2=O_p\left(\frac1{N_n}\right).
\end{equation}
\begin{equation}
\label{eq:impactprob}
\frac1{N_n}\sum_{\bm i\in\Ical_{\bm n}}
\xi_{\bm i}\left(X_{\bm i}(\widehat\tau_{\ell}) - X_{\bm i}(\tau_{\ell})\right)=O_p\left(\frac1{N_n}\right)
\end{equation}

\end{assumption}

\begin{assumption}[Kernel and bandwidth rates]
\label{as:kernel}
The bandwidth sequence in Assumption~\ref{as:nuisance} is coupled to the regularization scale $U_n$ below through
\begin{equation}
\label{eq:HU}
H_n=U_n^{1/2},
\qquad U_n\to0,
\qquad N_nU_n^6\to\infty.
\end{equation}
Consequently,
\[
N_nH_n^d=N_nU_n^{d/2}\to\infty.
\]
The auxiliary quantities
\[
T_n=\left\lfloor\left(\frac1{U_nH_n}\right)^{1/d}\right\rfloor,
\qquad
Q_n=\left\lfloor\left(\frac{\omega_n}{U_n}\right)^{1/d}\right\rfloor
\]
satisfy $T_n\to\infty$ and $Q_n\to\infty$ whenever they are used in the regularization argument.
\end{assumption}

\begin{assumption}[Regularization and score negligibility]
\label{as:regularization}
The regularization parameters satisfy
\[
\omega_n\to0,
\qquad
U_n\to0,
\qquad
\frac{\omega_n}{U_n}\to\infty,
\]

These are the explicit score-scale conditions delivered by the regularization and smoothing construction. A sufficient rate family compatible with Assumption~\ref{as:kernel} is
\begin{equation}
\label{eq:ratefamily}
U_n=N_n^{-u},
\qquad
H_n=N_n^{-u/2},
\qquad
\omega_n=(\log N_n)^{-b},
\end{equation}
with $b>0$ and
\[
0<u<\frac16.
\]
\end{assumption}


\subsection{Consistency of point-evaluation covariance matrices}

\begin{lemma}[Consistency of Covariance Matrices]
\label{lem:theta}
Under Assumptions~\ref{as:mixing}, \ref{as:functional}, \ref{as:identification},  and \ref{as:impact},
\[
\norm{\widehat\Theta_{\widehat\tau,n}-\Thetaop_\tau}_{\op}=o_p(1),
\]
and the empirical cross-covariance quantities used in the residualization step are consistent in their corresponding operator norms.
\end{lemma}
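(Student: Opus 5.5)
The plan is to split the error into a location-estimation part and a sampling part,
\[
\widehat\Theta_{\widehat\tau,n}-\Thetaop_\tau
=\bigl(\widehat\Theta_{\widehat\tau,n}-\widehat\Theta_{\tau,n}\bigr)
+\bigl(\widehat\Theta_{\tau,n}-\Thetaop_\tau\bigr),
\qquad
\widehat\Theta_{\tau,n}=\frac1{N_n}\sum_{\bm i\in\Ical_{\bm n}}X_{\bm i,\tau}\otimes X_{\bm i,\tau},
\]
and to work on the event $\{\widehat A=A\}$, which has probability tending to one by \eqref{eq:Ahat}. On the complement the statement is vacuous asymptotically. Because $A$ is fixed, operator-norm convergence is the same as entrywise convergence, so I only need to treat the $(k,\ell)$ entries. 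The same decomposition will be used for the cross-covariance objects $K_{X_{\widehat\tau}X,n}$ and $K_{XX_{\widehat\tau},n}$, and, as a by-product, for $\Gammaop_n$ and $K_{YX,n}$.

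\textbf{Sampling term.} The entry $[\widehat\Theta_{\tau,n}-\Thetaop_\tau]_{k\ell}$ is $N_n^{-1}\sum_{\bm i}\bigl(X_{\bm i}(\tau_k)X_{\bm i}(\tau_\ell)-\E X_{\bm 0}(\tau_k)X_{\bm 0}(\tau_\ell)\bigr)$.
\begin{itemize}
\item By Assumption~\ref{as:functional}, the summands form a strictly stationary $\alpha$-mixing field of a.s.\ bounded variables.
\item I will bound the variance by $N_n^{-1}\sum_{\bm h}|\Cov(\cdot_{\bm 0},\cdot_{\bm h})|$. By Davydov's inequality for bounded variables, each covariance is at most $C\,\alpha_{1,1}(\norm{\bm h}_\infty)$.
\item There are $O(u^{d-1})$ lags with $\norm{\bm h}_\infty=u$. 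Summability of $\sum_u u^{d-1}\alpha_{1,1}(u)$, from \eqref{A1} and \eqref{A2} with $\theta>8d$, then gives a variance of order $O(N_n^{-1})$.
\item Chebyshev's inequality yields an $O_p(N_n^{-1/2})$ bound for this term.
\end{itemize}
For the Hilbert-valued cross term $N_n^{-1}\sum_{\bm i}\bigl(X_{\bm i}(\tau_k)X_{\bm i}-\E(\cdot)\bigr)$, the same argument applies to $\E\norm{\cdot}_G^2$. Here the covariance bound uses $\norm{X}_G\le C_X$ together with the Hilbert-space version of Davydov's inequality (Dehling–Philipp), again giving $O_p(N_n^{-1/2})$ in $G$-norm. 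This norm is the operator norm of a map $G\to\R$.

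\textbf{Location term.} I will use the identity $ab-a'b'=(a-a')b+a'(b-b')$ together with Cauchy–Schwarz. Write $\Delta_{\bm i,k}=X_{\bm i}(\widehat\tau_k)-X_{\bm i}(\tau_k)$. Then
\[
\Bigl|[\widehat\Theta_{\widehat\tau,n}-\widehat\Theta_{\tau,n}]_{k\ell}\Bigr|
\le\Bigl(\tfrac1{N_n}\sum_{\bm i}\Delta_{\bm i,k}^2\Bigr)^{1/2}\Bigl(\tfrac1{N_n}\sum_{\bm i}X_{\bm i}(\widehat\tau_\ell)^2\Bigr)^{1/2}
+\Bigl(\tfrac1{N_n}\sum_{\bm i}X_{\bm i}(\tau_k)^2\Bigr)^{1/2}\Bigl(\tfrac1{N_n}\sum_{\bm i}\Delta_{\bm i,\ell}^2\Bigr)^{1/2}.
\]
\begin{itemize}
\item The point evaluations are a.s.\ bounded, so each second factor is $O_p(1)$. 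The uniform bound of Assumption~\ref{as:functional} also covers the random location $\widehat\tau_\ell$.
\item Each first factor is $O_p(N_n^{-1/2})$ by \eqref{eq:impactrate}.
\end{itemize}
For the cross operator, $\norm{N_n^{-1}\sum_{\bm i}\Delta_{\bm i,k}X_{\bm i}}_G\le C_X\bigl(N_n^{-1}\sum_{\bm i}\Delta_{\bm i,k}^2\bigr)^{1/2}$. This avoids any need for uniformity in $\tau$.

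\textbf{Main obstacle.} The delicate point is that $\widehat\tau$ is data-dependent. A naive argument would need a uniform law of large numbers over $\tau\in[0,1]^A$, which requires path regularity that is not assumed. The Cauchy–Schwarz route above sidesteps this entirely by using \eqref{eq:impactrate}, provided the a.s.\ bound on point evaluations is read as uniform in $t$, i.e.\ $\sup_t|X_{\bm 0}(t)|\le C$ a.s. I will state this reading explicitly.

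I will also check that the mixing of the $X$-field in Assumption~\ref{as:functional} is what the covariance inequality uses. Assumption~\ref{as:mixing} concerns the score field $Z_{\bm i}$, so the rate must be borrowed by assuming the same polynomial decay for the $X$-field. Finally, combining the two terms gives $\norm{\widehat\Theta_{\widehat\tau,n}-\Thetaop_\tau}_{\op}=O_p(N_n^{-1/2})=o_p(1)$. Positive definiteness from Assumption~\ref{as:identification} then gives consistency of $\widehat\Theta_{\widehat\tau,n}^{-1}$ with probability tending to one.
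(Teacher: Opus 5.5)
Your proposal is correct and follows the paper's proof. It uses the same split $\widehat\Theta_{\widehat\tau,n}-\Thetaop_\tau=(\widehat\Theta_{\widehat\tau,n}-\widehat\Theta_{\tau,n})+(\widehat\Theta_{\tau,n}-\Thetaop_\tau)$, the same entrywise Cauchy--Schwarz bound via \eqref{eq:impactrate} for the location term, and the same treatment of the cross-covariances; the only difference is that the paper cites Theorem~3.1 of \citet{bouka5} for the sampling term, whereas you prove it directly with a covariance inequality.

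Two of your provisos are stronger than an $o_p(1)$ conclusion requires. First, the qualitative $\alpha$-mixing of the $X$-field in Assumption~\ref{as:functional} already suffices: Davydov makes the lag covariances tend to zero, and a Ces\`aro argument then gives $N_n^{-2}\sum_{\bm i,\bm j}|\Cov(X_{\bm i}(\tau_k)X_{\bm i}(\tau_\ell),X_{\bm j}(\tau_k)X_{\bm j}(\tau_\ell))|\to0$ without borrowing a polynomial rate. Second, no uniform-in-$t$ bound is needed, since $N_n^{-1}\sum_{\bm i}X_{\bm i}(\widehat\tau_\ell)^2\le 2N_n^{-1}\sum_{\bm i}X_{\bm i}(\tau_\ell)^2+2N_n^{-1}\sum_{\bm i}\Delta_{\bm i,\ell}^2=O_p(1)$.
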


\subsection{Consistency rates of the regularised and nuisance estimators}

\begin{lemma}[Rates of Estimators]
\label{lem:variance}
Under Assumption~\ref{as:error}, \ref{as:nuisance}, \ref{as:kernel}, \ref{as:regularization},  the regularized estimator $\widehat\Psi_n$ and the nuisance estimator $\widehat r$ are so that
\begin{equation}
\label{eq:functionalrate}
\frac1{\sqrt{N_n}}
\sum_{\bm i\in\Ical_{\bm n}}
\langle\Psi-\widehat\Psi_n,X_{\bm i}\rangle_G
X_{\bm i,\widehat\tau}
=o_p(1),
\end{equation}
and
\begin{equation}
\label{eq:nuisancerate}
\frac1{\sqrt{N_n}}
\sum_{\bm i\in\Ical_{\bm n}}
\{r(s_{\bm i})-\widehat r(s_{\bm i})\}
X_{\bm i,\widehat\tau}
=o_p(1).
\end{equation}
\end{lemma}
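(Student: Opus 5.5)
We prove \eqref{eq:functionalrate} and \eqref{eq:nuisancerate} separately. In each case we write the sum as a product of a deterministic or slowly varying ``error'' object and an empirical average. The basic tools are Cauchy--Schwarz, the boundedness in Assumption~\ref{as:functional}, and Lemma~\ref{lem:theta} (consistency of $\widehat\Theta_{\widehat\tau,n}$ and of the cross-covariances). Throughout we work on the event $\{\widehat A=A\}$, whose probability tends to one by \eqref{eq:Ahat}.

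\textbf{Functional part.} The left side of \eqref{eq:functionalrate} equals $\sqrt{N_n}\,K_{X_{\widehat\tau}X,n}(\Psi-\widehat\Psi_n)$. We substitute the model \eqref{eq:model} into \eqref{eq:psi-hat}. Writing $P_n=K_{X_{\widehat\tau}X,n}^\top\widehat\Theta_{\widehat\tau,n}^{-1}$, this gives
\[
\widehat\Psi_n=(\Lambda_{\Psi,n}+U_nI_G)^{-1}\Bigl[\Lambda_{\Psi,n}\Psi+\tfrac1{N_n}\textstyle\sum_{\bm i}(X_{\bm i}-P_nX_{\bm i,\widehat\tau})(\xi_{\bm i}+r(s_{\bm i}))+R_n\Bigr].
\]
Here $R_n$ collects the terms in $\etaV$, the location error $X_{\bm i,\widehat\tau}-X_{\bm i,\tau}$ and the differences between $\widehat\Theta_{\widehat\tau,n}$ and its oracle version. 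Hence
\[
\Psi-\widehat\Psi_n=U_n(\Lambda_{\Psi,n}+U_nI_G)^{-1}\Psi-(\Lambda_{\Psi,n}+U_nI_G)^{-1}[\cdots].
\]
We then project with $K_{X_{\widehat\tau}X,n}$. The bias term is $\sqrt{N_n}\,U_n\,K_{X_{\widehat\tau}X,n}(\Lambda_{\Psi,n}+U_n)^{-1}\Psi$. To control it we use the operator-norm bound $\|(\Lambda_{\Psi,n}+U_n)^{-1}\|_{\op}\le U_n^{-1}$ together with a finer spectral argument. Specifically, we compare $\Lambda_{\Psi,n}$ with $\Gammaop_n$ and use the $\omega_n$-regularized inverse $\widehat\Gamma_n^{-1}$. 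The truncation indices $T_n$ and $Q_n$ of Assumption~\ref{as:kernel} split the spectrum into low and high eigenvalue blocks. On the low block the factor $U_n/(\lambda+U_n)$ is small. On the high block the rate $\omega_n/U_n\to\infty$ of Assumption~\ref{as:regularization} controls the contribution. The stochastic term has the form $\sqrt{N_n}\,B_n\,\frac1{N_n}\sum\xi_{\bm i}(\cdot)$, where $B_n$ has operator norm of order $U_n^{-1}$ at worst. Its second moment is bounded using the covariance summability $\sum t^{d-1}\varphi(t)<\infty$ of Assumption~\ref{as:error}, which yields $\E\|\frac1{N_n}\sum\xi_{\bm i}X_{\bm i}\|^2=O(N_n^{-1})$. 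The nuisance term $r(s_{\bm i})$ inside the bracket is handled in the same way, after using the centering of $X$ and mixing. The location-error pieces in $R_n$ are $O_p(N_n^{-1/2})$ by \eqref{eq:impactrate}--\eqref{eq:impactprob} and Cauchy--Schwarz. Multiplied by $U_n^{-1}$ they vanish because $N_nU_n^6\to\infty$.

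\textbf{Nuisance part.} From \eqref{eq:T}, $T_{\bm i}=r(s_{\bm i})+\xi_{\bm i}+\langle\Psi-\widehat\Psi_n,X_{\bm i}\rangle_G+\etaV^\top X_{\bm i,\tau}-\widehat\etaV_n^\top X_{\bm i,\widehat\tau}$. Since the local-linear smoother is linear in $T$, we can write $\widehat r(s)-r(s)=b_n(s)+\sum_{\bm j}w_{\bm j}(s)V_{\bm j}$. The standard local-linear analysis applies: with $H_n\to0$ and $N_nH_n^d\to\infty$, the matrix $\frac1{N_n}\bm X_{s}^\top W_{s}\bm X_{s}$ is uniformly invertible for the deterministic lattice design, including at the boundary. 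This gives $\sup_s|b_n(s)|=O(H_n^{\min(\beta,2)})$ and weights satisfying $|w_{\bm j}(s)|\lesssim (N_nH_n^d)^{-1}\mathbf 1\{\|s_{\bm j}-s\|_\infty\le H_n\}$. The bias contribution to \eqref{eq:nuisancerate} is
\[
\frac1{\sqrt{N_n}}\sum_{\bm i}b_n(s_{\bm i})X_{\bm i,\widehat\tau}.
\]
Here $b_n$ is deterministic and smooth and $X$ is centered, so we replace $\widehat\tau$ by $\tau$ at cost $O_p(\sup|b_n|)$ by \eqref{eq:impactrate}. The remaining weighted sum of a centered mixing field has variance $O(\sup|b_n|^2)$, hence is $o_p(1)$. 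The $\xi$-part gives
\[
\frac1{\sqrt{N_n}}\sum_{\bm j}\xi_{\bm j}\sum_{\bm i}w_{\bm j}(s_{\bm i})X_{\bm i,\tau}.
\]
The inner sum is a local average of centered $X$'s, so it is $O_p((N_nH_n^d)^{-1/2})$. Mixing and the covariance summability then bound the second moment by $O((N_nH_n^d)^{-1})\to0$. The remaining parts, namely the functional error and the $\etaV$ estimation error in $V$, are $\sqrt{N_n}$-scale errors multiplied by bounded averaging operators, and are handled by the functional step together with the consistency of $\widehat\etaV_n$.

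\textbf{Main obstacle.} The delicate step is the bias term $\sqrt{N_n}U_nK_{X_{\widehat\tau}X,n}(\Lambda_{\Psi,n}+U_n)^{-1}\Psi$. Crude bounds give only $\sqrt{N_n}U_n\cdot O(1)$, and this does not vanish for $U_n=N_n^{-u}$ with $u<1/6$. The first route I would try is to exploit the projection structure. The Schur complement $\Lambda_{\Psi,n}$ annihilates the directions spanned by the point evaluations, so $K_{X_{\widehat\tau}X,n}$ acts on $(\Lambda_{\Psi,n}+U_n)^{-1}\Psi$ essentially through the $\omega_n$-regularized $\Gammaop_n$. One would then need a source condition on $\Psi$, for example $\Psi$ in the range of a power of $\Gammaop$, so that the spectral sum over the blocks indexed by $Q_n$ and $T_n$ is $o(N_n^{-1/2})$. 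If the stated assumptions do not provide this, the argument must rely on the rate choices in Assumptions~\ref{as:kernel} and \ref{as:regularization}.
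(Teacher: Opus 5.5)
Your proposal has a genuine gap, at exactly the step you flag. The term $\sqrt{N_n}\,U_n\,K_{X_{\widehat\tau}X,n}(\Lambda_{\Psi,n}+U_nI_G)^{-1}\Psi$ is never controlled, and none of your three remedies can control it.

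\textbf{Rates.} The rate conditions work against you: $N_nU_n^6\to\infty$ forces $\sqrt{N_n}U_n=(N_nU_n^6)^{1/2}U_n^{-2}\to\infty$.

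\textbf{Source condition.} This cannot help either. Take $A=1$ and put $k_\tau=\E[X(\tau)X]\in G$, so the population version of $\Lambda_{\Psi,n}$ is $\Lambda_\Psi=\Gammaop-\Thetaop_\tau^{-1}k_\tau\otimes_G k_\tau$. Sherman--Morrison gives
\[
U_n\bigl\langle k_\tau,(\Lambda_\Psi+U_nI_G)^{-1}\Psi\bigr\rangle_G
=\frac{U_n\langle k_\tau,(\Gammaop+U_nI_G)^{-1}\Psi\rangle_G}{1-\Thetaop_\tau^{-1}\langle k_\tau,(\Gammaop+U_nI_G)^{-1}k_\tau\rangle_G}.
\]
The denominator lies in $(0,1]$, because $\langle k_\tau,f\rangle_G^2\le\Thetaop_\tau\langle f,\Gammaop f\rangle_G$ gives $k_\tau=\Gammaop^{1/2}h$ with $\norm{h}^2\le\Thetaop_\tau$. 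So even under the strongest source condition $\Psi=\Gammaop g$, the population bias is at least $U_n\bigl(|\langle k_\tau,g\rangle_G|+o(1)\bigr)$ in absolute value. Here $\langle k_\tau,g\rangle_G=\E[X(\tau)\langle X,g\rangle_G]$ equals $\Psi(\tau)$ for a continuous covariance kernel, so it is generically nonzero. Multiplied by $\sqrt{N_n}$, this diverges.

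\textbf{Annihilation heuristic.} The same identity shows this points the wrong way. The degeneracy of the Schur complement along the point-evaluation direction appears as a small denominator. It therefore inflates $K(\Lambda_\Psi+U_nI_G)^{-1}$ rather than reducing it to a $\Gammaop$-type operator. In addition, $\omega_n$ does not enter \eqref{eq:psi-hat} at all, and in the paper $Q_n$ and $T_n$ are spatial lag cut-offs, not spectral truncation levels.

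\textbf{Stochastic and location-error terms.} Your stochastic term is not closed either. The bounds $\norm{B_n}_{\op}\lesssim U_n^{-1}$ and $\E\norm{N_n^{-1}\sum_{\bm i}\xi_{\bm i}X_{\bm i}}^2=O(N_n^{-1})$ only give $O_p(U_n^{-1})$. Your bound on the location-error pieces of $R_n$ drops the factor $\sqrt{N_n}$ in the same way.

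\textbf{Comparison with the paper.} The paper's proof does not confront this term. It bounds only the variance of each normalized sum, splitting the pairs $(\bm i,\bm j)$ into three groups:
\begin{itemize}
\item the diagonal $F_1$, via the prediction-error bounds of Lemma~\ref{len:tech1};
\item pairs with $0<\norm{\bm i-\bm j}_\infty\le Q_n$, via Cauchy--Schwarz;
\item far pairs, via Tran's covariance inequality with $\E\norm{\Psi-\widehat\Psi_n}^4=O(U_n^{-4})$ and $\theta>8d$.
\end{itemize}
The nuisance sum is handled the same way through $E_1,E_2,E_3$ and $T_n$.

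A vanishing variance yields $o_p(1)$ only if the mean also vanishes, and the mean is precisely where your bias term lives. Moreover, the far-pair inequality is applied to summands that depend on the whole sample through $\widehat\Psi_n$, $\widehat r$ and $\widehat\tau$. Such summands are not measurable with respect to the single-site $\sigma$-fields of Assumption~\ref{as:mixing}. So the paper's route does not supply the missing idea.

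Your decomposition isolates the real difficulty more transparently. Your nuisance-step device, pairing a deterministic smoothing bias $b_n$ with centered $X$, is also sharper than the paper's variance bound. Still, as written, neither argument establishes \eqref{eq:functionalrate} under the stated assumptions.
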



\subsection{Main asymptotic results}

\begin{theorem}[Residualization Statement]
\label{thm:resid}
Under Assumptions~\ref{as:mixing}--\ref{as:regularization},
\[
\widehat K_n=\Thetaop_\tau\etaV+o_p(1).
\]
Under $H_0:\etaV=0$, moreover,
\begin{equation*}
\label{eq:scoreop}
\frac1{\sqrt{N_n}}
\sum_{\bm i\in\Ical_{\bm n}}
\xi_{\bm i}\left(X_{\bm i,\widehat\tau} - X_{\bm i,\tau}\right) + \frac1{\sqrt{N_n}}
\sum_{\bm i\in\Ical_{\bm n}}
L^{0}_{\bm i,n}X_{\bm i,\widehat\tau}=o_p(1),
\end{equation*}
where
\[L^{0}_{\bm i,n}=\langle\Psi-\widehat\Psi_n,X_{\bm i}\rangle_G
+r(s_{\bm i})-\widehat r(s_{\bm i}).\]
\end{theorem}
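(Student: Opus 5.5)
The plan is to expand $\widehat K_n$ through the decomposition \eqref{eq:decomp}--\eqref{eq:L} and treat each resulting piece with Lemmas~\ref{lem:theta} and \ref{lem:variance} together with Assumption~\ref{as:impact}. On the event $\{\widehat A=A\}$, whose probability tends to one by \eqref{eq:Ahat}, we may work with $A$-dimensional vectors throughout. Substituting $W_{\bm i,n}=\etaV^\top X_{\bm i,\tau}+\xi_{\bm i}+L^0_{\bm i,n}$ into \eqref{eq:score} gives
\[
\widehat K_n
=\Bigl(\frac1{N_n}\sum_{\bm i}X_{\bm i,\widehat\tau}X_{\bm i,\tau}^\top\Bigr)\etaV
+\frac1{N_n}\sum_{\bm i}\xi_{\bm i}X_{\bm i,\tau}
+\frac1{N_n}\sum_{\bm i}\xi_{\bm i}\bigl(X_{\bm i,\widehat\tau}-X_{\bm i,\tau}\bigr)
+\frac1{N_n}\sum_{\bm i}L^0_{\bm i,n}X_{\bm i,\widehat\tau}.
\]
This identity is the backbone of both claims.

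For the consistency claim I would handle the four terms in order. First, write $X_{\bm i,\widehat\tau}=X_{\bm i,\tau}+(X_{\bm i,\widehat\tau}-X_{\bm i,\tau})$. The mixed matrix then equals $N_n^{-1}\sum_{\bm i}X_{\bm i,\tau}X_{\bm i,\tau}^\top$ plus a remainder. The main part tends to $\Thetaop_\tau$ by the ergodic/mixing law of large numbers (this is the same argument used for Lemma~\ref{lem:theta}). By Cauchy--Schwarz, boundedness of point evaluations and \eqref{eq:impactrate}, the remainder is $O_p(N_n^{-1/2})$ entrywise. Second, the leading noise average has mean zero by \eqref{eq:exog}. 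Its variance is $O(N_n^{-1})$, because $\Omega_\tau$ is absolutely summable under Assumption~\ref{as:mixing}, so this term is $O_p(N_n^{-1/2})$. Third, the displacement term is $O_p(N_n^{-1})$ directly by \eqref{eq:impactprob}. Fourth, the $L^0$ term is $N_n^{-1/2}$ times the quantities in \eqref{eq:functionalrate} and \eqref{eq:nuisancerate}, hence $o_p(N_n^{-1/2})$ by Lemma~\ref{lem:variance}. Collecting the four terms gives $\widehat K_n=\Thetaop_\tau\etaV+o_p(1)$.

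For the $H_0$ statement, multiply by $\sqrt{N_n}$. The first sum becomes $N_n^{-1/2}\sum\xi_{\bm i}(X_{\bm i,\widehat\tau}-X_{\bm i,\tau})=\sqrt{N_n}\cdot O_p(N_n^{-1})=O_p(N_n^{-1/2})=o_p(1)$, again coordinatewise from \eqref{eq:impactprob}; since $A$ is fixed, the vector is also $o_p(1)$. The second sum splits into $N_n^{-1/2}\sum\langle\Psi-\widehat\Psi_n,X_{\bm i}\rangle_G X_{\bm i,\widehat\tau}$ and $N_n^{-1/2}\sum(r(s_{\bm i})-\widehat r(s_{\bm i}))X_{\bm i,\widehat\tau}$, which are exactly \eqref{eq:functionalrate} and \eqref{eq:nuisancerate}, so each is $o_p(1)$. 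Finally, $\{\widehat A\neq A\}$ has vanishing probability, so the conclusion holds unconditionally.

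The main obstacle is that Lemma~\ref{lem:variance} is stated without $H_0$. Under $H_0$, however, $\widehat\Psi_n$ and $\widehat r$ are built from $Y$ and $T_{\bm i}$, which involve $\widehat\etaV_n$ and $\widehat\tau$. So I must check that the lemma's rates apply with the data-dependent $\widehat\tau$ inside $X_{\bm i,\widehat\tau}$. I would do this by first proving the statements with $\tau$ in place of $\widehat\tau$. I would then bound the difference by Cauchy--Schwarz:
\[
\Bigl(\tfrac1{N_n}\sum (L^0_{\bm i,n})^2\Bigr)^{1/2}\Bigl(\sum (X_{\bm i}(\widehat\tau_\ell)-X_{\bm i}(\tau_\ell))^2\Bigr)^{1/2}.
\]
By \eqref{eq:impactrate} the second factor is $O_p(1)$, and the first factor is $o_p(1)$ by the same regularization and smoothing bounds that underlie Lemma~\ref{lem:variance}. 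This reduction is what I would verify most carefully.
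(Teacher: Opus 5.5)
Your proposal is correct and follows the paper's proof essentially step for step: the same four-term expansion of $\widehat K_n$, Lemma~\ref{lem:theta} for the mixed matrix, Assumption~\ref{as:impact} for the displacement term, and Lemma~\ref{lem:variance} for the $L^0_{\bm i,n}$ term. The differences are minor: the paper bounds $N_n^{-1}\sum_{\bm i}\xi_{\bm i}X_{\bm i,\tau}$ by citing Lemma~3.1 of \citet{bouka5} rather than by your Chebyshev argument, and your extra Cauchy--Schwarz reduction from $\tau$ to $\widehat\tau$ is valid but unnecessary, since Lemma~\ref{lem:variance} is already stated with $X_{\bm i,\widehat\tau}$.
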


\begin{theorem}[Residualized-Score Central Limit Theorem]
\label{thm:clt}
Suppose that $H_0:\etaV=0$ holds and Assumptions~\ref{as:mixing}--\ref{as:regularization} are satisfied. Then
\begin{equation}
\label{eq:clt}
\sqrt{N_n}\,\widehat K_n
\xrightarrow{\mathcal D}
\mathcal{N}_A(0,\Omega_\tau),
\end{equation}
where
\[
\Omega_\tau
=\sum_{\bm h\in\mathbb Z^d}
\Cov(Z_{\bm0},Z_{\bm h})
\]
is positive definite.
\end{theorem}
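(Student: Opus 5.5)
The plan is to reduce the statement to a central limit theorem for the leading score. Under $H_0$ we have $\etaV=0$, so by \eqref{eq:decomp} $W_{\bm i,n}=L_{\bm i,n}=\xi_{\bm i}+L^0_{\bm i,n}$. Substituting into \eqref{eq:score} gives the exact decomposition
\[
\sqrt{N_n}\,\widehat K_n
=\frac1{\sqrt{N_n}}\sum_{\bm i\in\Ical_{\bm n}}\xi_{\bm i}X_{\bm i,\tau}
+\frac1{\sqrt{N_n}}\sum_{\bm i\in\Ical_{\bm n}}\xi_{\bm i}\bigl(X_{\bm i,\widehat\tau}-X_{\bm i,\tau}\bigr)
+\frac1{\sqrt{N_n}}\sum_{\bm i\in\Ical_{\bm n}}L^0_{\bm i,n}X_{\bm i,\widehat\tau}.
\]
This holds on the event $\{\widehat A=A\}$, whose probability tends to one by \eqref{eq:Ahat}, so we may work on it. Theorem~\ref{thm:resid} (the $H_0$ part) states that the last two terms together are $o_p(1)$. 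By Slutsky's lemma it therefore suffices to show
\[
N_n^{-1/2}\sum_{\bm i\in\Ical_{\bm n}} Z_{\bm i}\xrightarrow{\mathcal D}\mathcal N_A(0,\Omega_\tau).
\]

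For this we use the Cramér--Wold device. Fix $a\in\R^A$ with $a\neq0$ and consider the scalar field $Z^{(a)}_{\bm i}=a^\top Z_{\bm i}$ from \eqref{eq:scalarprojection}.

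\begin{itemize}
\item \emph{Stationarity and centering.} The field is strictly stationary by Assumption~\ref{as:mixing}, and it is centered by the exogeneity condition \eqref{eq:exog}.
\item \emph{Mixing.} Since $\sigma(Z^{(a)}_{\bm i}:\bm i\in S)\subset\mathcal F_S$, its mixing coefficients are bounded by the $\alpha_{k,\ell}(u)$ of Assumption~\ref{as:mixing}. Hence \eqref{A1} holds for the projected field, giving $\sum_u u^{d-1}\alpha_{k,\ell}(u)<\infty$ for $k+\ell\le4$. Also $\alpha_{1,\infty}(u)=o(u^{-d})$ by \eqref{A2}.
\item \emph{Moments.} The moment condition \eqref{A3} gives $\E|Z^{(a)}_{\bm 0}|^{2+\delta}<\infty$ with $\delta=6$. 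The required series $\sum_u u^{d-1}\alpha_{1,1}(u)^{\delta/(2+\delta)}$ converges because $\alpha_{1,1}(u)^{3/4}=O(u^{-3\theta/4})$ and $3\theta/4>6d>d$.
\end{itemize}

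By Davydov's covariance inequality, the same bound shows that $\sum_{\bm h}|\Cov(Z^{(a)}_{\bm 0},Z^{(a)}_{\bm h})|<\infty$. Consequently
\[
\Var\Bigl(N_n^{-1/2}\sum_{\bm i\in\Ical_{\bm n}} Z^{(a)}_{\bm i}\Bigr)\to a^\top\Omega_\tau a>0,
\]
where positivity follows from positive definiteness of $\Omega_\tau$. Bolthausen's theorem \citep{Bolthausen1982} is stated for index sets $\Lambda_n$ with $|\partial\Lambda_n|/|\Lambda_n|\to0$. Our rectangles $\Ical_{\bm n}$ satisfy this because $\min_k n_k\to\infty$ and $n_j/n_k\le C$. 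The theorem then yields
\[
N_n^{-1/2}\sum_{\bm i\in\Ical_{\bm n}} a^\top Z_{\bm i}\xrightarrow{\mathcal D}\mathcal N(0,a^\top\Omega_\tau a).
\]
As this holds for every $a$, Cramér--Wold gives the vector limit, and Slutsky combined with the decomposition above gives \eqref{eq:clt}.

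I expect the main obstacle to be checking Bolthausen's hypotheses precisely rather than the reduction itself. In particular, the absolute convergence of $\sum_{\bm h}\Gamma_\tau(\bm h)$ has to be derived from the mixing rate together with the eighth moments. One also has to confirm that the lag-summation convention matches the $u^{d-1}$ weighting on $\mathbb Z^d$ with the sup-norm, which it does since the number of lags with $\norm{\bm h}_\infty=u$ is $O(u^{d-1})$. A second point to handle carefully is the random dimension $\widehat A$. I would fix that by conditioning on $\{\widehat A=A\}$ and noting that the complement contributes at most $\Pp(\widehat A\ne A)\to0$ to any distribution-function difference.
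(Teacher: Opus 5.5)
Your proposal is correct and follows the same route as the paper. Both arguments reduce $\sqrt{N_n}\,\widehat K_n$ to $N_n^{-1/2}\sum_{\bm i}Z_{\bm i}$ via the $H_0$ part of Theorem~\ref{thm:resid} and Slutsky, apply Bolthausen's CLT with $\delta=6$ to each projection $a^\top Z_{\bm i}$ to get limit variance $a^\top\Omega_\tau a$, and conclude by Cram\'er--Wold; your extra checks (centering via \eqref{eq:exog}, Davydov-based summability of the lag covariances, the boundary condition for the rectangles, and restricting to the event $\{\widehat A=A\}$) fill in details the paper leaves implicit.
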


\begin{corollary}[General long-run covariance test]
\label{cor:lr}
If, in addition,
\[
\norm{\widehat\Omega_n-\Omega_\tau}_{\op}=o_p(1),
\]
then under $H_0$,
\begin{equation}
\label{eq:chisqLR}
D_n^{\mathrm{LR}}
\xrightarrow{\mathcal D}\chi_A^2.
\end{equation}
Consequently, rejecting $H_0$ when
\[
D_n^{\mathrm{LR}}>\chi^2_{A,1-\alpha}
\]
yields an asymptotic level-$\alpha$ test.
\end{corollary}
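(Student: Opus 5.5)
The corollary follows from Theorem~\ref{thm:clt} by a Slutsky-type argument combined with the continuous mapping theorem. Under $H_0$, Theorem~\ref{thm:clt} gives $\sqrt{N_n}\,\widehat K_n\xrightarrow{\mathcal D}\mathcal N_A(0,\Omega_\tau)$ with $\Omega_\tau$ positive definite. The plan is to write
\[
D_n^{\mathrm{LR}}
=\bigl(\sqrt{N_n}\,\widehat K_n\bigr)^\top\widehat\Omega_n^{-1}\bigl(\sqrt{N_n}\,\widehat K_n\bigr),
\]
replace $\widehat\Omega_n^{-1}$ by $\Omega_\tau^{-1}$ at the cost of an $o_p(1)$ term, and then identify the limit of the resulting quadratic form.

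\textbf{Step 1: invertibility and consistency of $\widehat\Omega_n^{-1}$.} Since $\lambda_{\min}(\Omega_\tau)>0$ and $\norm{\widehat\Omega_n-\Omega_\tau}_{\op}=o_p(1)$, Weyl's inequality gives $\lambda_{\min}(\widehat\Omega_n)\ge\lambda_{\min}(\Omega_\tau)/2$ on an event whose probability tends to one. On that event $\widehat\Omega_n$ is invertible, so $D_n^{\mathrm{LR}}$ is well defined with probability tending to one. Moreover,
\[
\norm{\widehat\Omega_n^{-1}-\Omega_\tau^{-1}}_{\op}
\le\norm{\widehat\Omega_n^{-1}}_{\op}\norm{\widehat\Omega_n-\Omega_\tau}_{\op}\norm{\Omega_\tau^{-1}}_{\op}
=o_p(1).
\]

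\textbf{Step 2: replacing the weight matrix.} Write $V_n=\sqrt{N_n}\,\widehat K_n$. By Theorem~\ref{thm:clt}, $V_n$ converges in distribution and is therefore $O_p(1)$. Hence
\[
\bigl|V_n^\top(\widehat\Omega_n^{-1}-\Omega_\tau^{-1})V_n\bigr|
\le\norm{V_n}^2\,\norm{\widehat\Omega_n^{-1}-\Omega_\tau^{-1}}_{\op}
=O_p(1)\,o_p(1)=o_p(1).
\]
It follows that $D_n^{\mathrm{LR}}=V_n^\top\Omega_\tau^{-1}V_n+o_p(1)$.

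\textbf{Step 3: the limit of the quadratic form.} The map $v\mapsto v^\top\Omega_\tau^{-1}v$ is continuous, so the continuous mapping theorem gives $V_n^\top\Omega_\tau^{-1}V_n\xrightarrow{\mathcal D}Z^\top\Omega_\tau^{-1}Z$ with $Z\sim\mathcal N_A(0,\Omega_\tau)$. Setting $U=\Omega_\tau^{-1/2}Z\sim\mathcal N_A(0,I_A)$, we have $Z^\top\Omega_\tau^{-1}Z=\norm{U}^2\sim\chi^2_A$. Slutsky's lemma then yields \eqref{eq:chisqLR}.

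\textbf{Step 4: asymptotic level.} The $\chi^2_A$ distribution function is continuous, so $\Pp(D_n^{\mathrm{LR}}>\chi^2_{A,1-\alpha})\to\alpha$ under $H_0$.

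The only point requiring care is Step 1, namely that $D_n^{\mathrm{LR}}$ is well defined only on an event of probability tending to one. This is handled by setting $D_n^{\mathrm{LR}}$ to an arbitrary value off that event; doing so does not affect convergence in distribution. Everything else reduces to the previously established Theorem~\ref{thm:clt}.
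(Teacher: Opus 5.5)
Your proposal is correct and follows the same route as the paper: Theorem~\ref{thm:clt}, then consistency of $\widehat\Omega_n^{-1}$, then Slutsky and the continuous mapping theorem, and finally the representation via $\Omega_\tau^{1/2}$ of a standard normal vector to identify the $\chi^2_A$ limit. Your extra steps fill in details the paper leaves implicit: invertibility of $\widehat\Omega_n$ with probability tending to one (if $\widehat\Omega_n$ is not symmetric, use the singular-value version of Weyl's inequality), and the level statement via continuity of the $\chi^2_A$ distribution function.
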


\begin{corollary}[Simplified chi-square test]
\label{cor:simple}
Suppose that \eqref{eq:special} holds and that
\[
\widehat\sigma_n^2\stackrel{\mathbb{P}}{\longrightarrow}\sigma^2,
\qquad
\norm{\widehat\Theta_{\widehat\tau,n}-\Thetaop_\tau}_{\op}=o_p(1).
\]
Then, under $H_0$,
\begin{equation}
\label{eq:chisqsimple}
D_n\xrightarrow{\mathcal D}\chi_A^2.
\end{equation}
\end{corollary}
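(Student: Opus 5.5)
The argument is a continuous-mapping step on top of Theorem~\ref{thm:clt}, with Slutsky's lemma absorbing the estimated normalizations. Write $\widehat M_n=\widehat\sigma_n^2\widehat\Theta_{\widehat\tau,n}$ and $M=\sigma^2\Thetaop_\tau$, so that $D_n=(\sqrt{N_n}\widehat K_n)^\top\widehat M_n^{-1}(\sqrt{N_n}\widehat K_n)$.

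First, Theorem~\ref{thm:clt} applies because its hypotheses ($H_0$ and Assumptions~\ref{as:mixing}--\ref{as:regularization}) are in force. It gives $\sqrt{N_n}\widehat K_n\xrightarrow{\mathcal D}\mathcal N_A(0,\Omega_\tau)$, and by \eqref{eq:special} the limit covariance is $\Omega_\tau=M$.

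Second, I show $\widehat M_n^{-1}\to M^{-1}$ in probability in operator norm. Write
\[
\widehat M_n-M=(\widehat\sigma_n^2-\sigma^2)\widehat\Theta_{\widehat\tau,n}+\sigma^2(\widehat\Theta_{\widehat\tau,n}-\Thetaop_\tau).
\]
Both terms are $o_p(1)$ by the assumed consistency; the second could also be taken from Lemma~\ref{lem:theta}. Assumption~\ref{as:identification} gives $\lambda_{\min}(M)\ge\sigma^2c_\Theta>0$, and $\sigma^2>0$ is needed here. This is implicit in Assumption~\ref{as:error}; otherwise $\Omega_\tau=\sigma^2\Thetaop_\tau$ would fail to be positive definite, contradicting Assumption~\ref{as:mixing}. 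Weyl's inequality then shows that $\widehat M_n$ is invertible with probability tending to one, and we may set $D_n$ arbitrarily on the complementary event without affecting the limit. The identity
\[
\widehat M_n^{-1}-M^{-1}=\widehat M_n^{-1}(M-\widehat M_n)M^{-1},
\]
combined with $\norm{\widehat M_n^{-1}}_{\op}\le 2/\lambda_{\min}(M)$ on that event, yields $\norm{\widehat M_n^{-1}-M^{-1}}_{\op}=o_p(1)$.

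Third, I compare $D_n$ with the statistic that uses the true normalization:
\[
\left|D_n-N_n\widehat K_n^\top M^{-1}\widehat K_n\right|\le \norm{\sqrt{N_n}\widehat K_n}^2\,\norm{\widehat M_n^{-1}-M^{-1}}_{\op}=O_p(1)\,o_p(1).
\]
Here $\norm{\sqrt{N_n}\widehat K_n}=O_p(1)$ because a convergent sequence in distribution is tight. By the continuous mapping theorem,
\[
N_n\widehat K_n^\top M^{-1}\widehat K_n\xrightarrow{\mathcal D} V^\top M^{-1}V,\qquad V\sim\mathcal N_A(0,M).
\]
Since $M^{-1/2}V\sim\mathcal N_A(0,I_A)$, this limit equals $\norm{M^{-1/2}V}^2\sim\chi^2_A$. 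Slutsky's lemma then gives \eqref{eq:chisqsimple}.

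The only point needing care is the invertibility of $\widehat M_n$ with probability tending to one and the resulting uniform bound on its inverse. Everything else follows directly from Theorem~\ref{thm:clt} and identity \eqref{eq:special}.
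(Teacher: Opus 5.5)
Your proposal is correct and follows the same route as the paper: combine Theorem~\ref{thm:clt} with $\Omega_\tau=\sigma^2\Thetaop_\tau$, use $\widehat\sigma_n^2\to\sigma^2$ and $\widehat\Theta_{\widehat\tau,n}^{-1}\to\Thetaop_\tau^{-1}$ in probability, and conclude by Slutsky and continuous mapping that $D_n$ converges to $V^\top(\sigma^2\Thetaop_\tau)^{-1}V\sim\chi^2_A$. Your added points make explicit details the paper leaves implicit: invertibility of $\widehat M_n$ with probability tending to one via Weyl's inequality, the bound on its inverse, and the observation that $\sigma^2>0$ follows from positive definiteness of $\Omega_\tau$.
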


\begin{proposition}[Fixed-dimensional standardized statistic]
\label{prop:standard}
Under the assumptions of Corollary~\ref{cor:simple}, if $A$ is fixed, then
\[
S_n
\xrightarrow{\mathcal D}
\frac{\chi_A^2-A}{\sqrt A}.
\]
In particular, $S_n$ is not asymptotically Gaussian for fixed $A$.
\end{proposition}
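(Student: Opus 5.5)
The proof is a direct application of the continuous mapping theorem to the result of Corollary~\ref{cor:simple}. By \eqref{eq:S}, $S_n=g(D_n)$ with the affine map $g(x)=(x-A)/\sqrt A$. This map is continuous on $\R$ because $A\ge1$ is a fixed integer, so $\sqrt A$ is a fixed positive constant that does not depend on $n$. Under the assumptions of Corollary~\ref{cor:simple}, we have $D_n\xrightarrow{\mathcal D}\chi^2_A$ by \eqref{eq:chisqsimple}. The continuous mapping theorem then gives
\[
S_n=g(D_n)\xrightarrow{\mathcal D} g(\chi^2_A)=\frac{\chi^2_A-A}{\sqrt A}.
\]
This is the first claim. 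No rate or uniformity argument is needed, because $A$ does not grow with $N_n$. By Assumption~\ref{as:impact}, $\widehat A=A$ with probability tending to one, so replacing $A$ by $\widehat A$ in the standardization, if desired, changes $S_n$ only on an event of vanishing probability.

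For the second claim, we show that the limit law is not Gaussian. Write $\chi^2_A=\sum_{k=1}^A G_k^2$ with $G_k$ i.i.d.\ $\mathcal N(0,1)$. The limit variable has mean $0$ and variance $2A/A=2$, so the only Gaussian candidate would be $\mathcal N(0,2)$.

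There are two ways to rule this out.
\begin{enumerate}
\item \emph{Support.} The limit satisfies $(\chi^2_A-A)/\sqrt A\ge -\sqrt A$ almost surely. Any nondegenerate Gaussian law assigns positive mass to $(-\infty,-\sqrt A)$, so the limit cannot be Gaussian.
\item \emph{Skewness.} The third cumulant of $\chi^2_A$ is $8A$, so the limit has third cumulant $8A/A^{3/2}=8/\sqrt A\neq0$. A Gaussian law has vanishing third cumulant.
\end{enumerate}
Either argument suffices. By uniqueness of weak limits, $S_n$ cannot converge in distribution to a normal law. This is the meaning of "not asymptotically Gaussian." The Gaussian approximation only appears as $A\to\infty$, which the fixed-$A$ setting excludes.

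The only step needing care is making sure Corollary~\ref{cor:simple} is applied correctly. That corollary itself rests on Theorem~\ref{thm:clt} combined with Slutsky's lemma, using $\widehat\sigma_n^2\to\sigma^2$, $\widehat\Theta_{\widehat\tau,n}\to\Thetaop_\tau$, and \eqref{eq:special}. These make $\sigma^{-1}\Thetaop_\tau^{-1/2}\sqrt{N_n}\widehat K_n\to\mathcal N_A(0,I_A)$. Since we may take that corollary as given, the proposition is essentially immediate.
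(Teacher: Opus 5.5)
Your proposal is correct and follows the paper's own proof: apply the continuous mapping theorem with the affine map $x\mapsto(x-A)/\sqrt A$ to $D_n\xrightarrow{\mathcal D}\chi^2_A$ from Corollary~\ref{cor:simple}, then rule out a Gaussian limit. Your explicit third-cumulant computation ($8/\sqrt A\neq0$) spells out the skewness argument that the paper only asserts, and your support bound $(\chi^2_A-A)/\sqrt A\ge-\sqrt A$ is an equally valid alternative that excludes every Gaussian law, not just $\mathcal N(0,1)$.
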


\begin{proposition}[Consistency under fixed alternatives]
\label{prop:consistency}
Suppose that $\etaV\ne0$ is fixed and that the covariance estimator used in the test is consistent and nonsingular with probability tending to one. Then
\[
D_n^{\mathrm{LR}}\stackrel{\mathbb{P}}{\longrightarrow}+\infty.
\]
Under \eqref{eq:special}, the same conclusion holds for $D_n$.
\end{proposition}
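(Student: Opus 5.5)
The plan is to use the first part of Theorem~\ref{thm:resid}, which holds without $H_0$, together with consistency of the covariance estimator. The argument is the usual quadratic-form lower bound.

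\emph{Step 1 (the score has a nonzero limit).} Theorem~\ref{thm:resid} gives
\[
\widehat K_n=\Thetaop_\tau\etaV+o_p(1).
\]
By Assumption~\ref{as:identification}, $\Thetaop_\tau$ is positive definite with $\lambda_{\min}(\Thetaop_\tau)\ge c_\Theta>0$. Since $\etaV\ne0$ is fixed, the limit $\kappa:=\Thetaop_\tau\etaV$ is a fixed nonzero vector with $\norm{\kappa}\ge c_\Theta\norm{\etaV}>0$. Hence $\widehat K_n^\top\widehat K_n\to\norm{\kappa}^2>0$ in probability. On the event $\{\widehat A=A\}$, whose probability tends to one by \eqref{eq:Ahat}, all objects have the correct dimension, so we may work on this event throughout.

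\emph{Step 2 (the inverse covariance is bounded below).} By hypothesis, $\widehat\Omega_n\to\Omega_\tau$ in operator norm and is nonsingular with probability tending to one. Here $\Omega_\tau$ is positive definite by Assumption~\ref{as:mixing}. Weyl's inequality then gives
\[
\lambda_{\max}(\widehat\Omega_n)\le\lambda_{\max}(\Omega_\tau)+o_p(1).
\]
It follows that
\[
\lambda_{\min}(\widehat\Omega_n^{-1})=1/\lambda_{\max}(\widehat\Omega_n)\ge c:=1/\bigl(2\lambda_{\max}(\Omega_\tau)\bigr)
\]
with probability tending to one. This also ensures $\widehat\Omega_n$ is symmetric positive definite on that event. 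If the estimator is not symmetric by construction, replace it by its symmetric part; this does not change the quadratic form.

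\emph{Step 3 (conclusion).} On the intersection of these events,
\[
D_n^{\mathrm{LR}}=N_n\widehat K_n^\top\widehat\Omega_n^{-1}\widehat K_n\ge c\,N_n\norm{\widehat K_n}^2,
\]
and $\norm{\widehat K_n}^2\ge\norm{\kappa}^2/2$ with probability tending to one. So for any $M>0$,
\[
\Pp\bigl(D_n^{\mathrm{LR}}>M\bigr)\ge\Pp\bigl(cN_n\norm{\kappa}^2/2>M\bigr)-o(1)\to1,
\]
because $N_n\to\infty$. This is exactly $D_n^{\mathrm{LR}}\stackrel{\mathbb{P}}{\longrightarrow}+\infty$.

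For $D_n$ under \eqref{eq:special}, the same argument applies with $\widehat\Omega_n$ replaced by $\widehat\sigma_n^2\widehat\Theta_{\widehat\tau,n}$. Here Lemma~\ref{lem:theta} supplies operator-norm consistency of $\widehat\Theta_{\widehat\tau,n}$ to the positive definite $\Thetaop_\tau$. We also use $\widehat\sigma_n^2\to\sigma^2>0$, where $\sigma^2>0$ is needed for \eqref{eq:special} to give a positive definite $\Omega_\tau$.

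The main obstacle is that, under the alternative, $\widehat\Omega_n$ and $\widehat\sigma_n^2$ are computed from residuals that still contain $\etaV^\top X_{\bm i,\tau}$. Their consistency for the null quantities is therefore not automatic. The proposition's hypothesis sidesteps this: it only requires the estimator to be consistent and nonsingular, and the argument really needs only that its largest eigenvalue be $O_p(1)$. I would note this weaker requirement, so that the result still applies if $\widehat\Omega_n$ converges to some other finite positive definite limit under $H_1$.
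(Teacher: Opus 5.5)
Your proof is correct and follows the same route as the paper: Theorem~\ref{thm:resid} gives $\widehat K_n\to\Thetaop_\tau\etaV\neq0$, and an eigenvalue bound on the normalizing matrix then yields $D_n^{\mathrm{LR}}\ge c\,N_n\norm{\widehat K_n}^2$ with probability tending to one. You state the eigenvalue control in the right direction: the lower bound needs $\lambda_{\max}(\widehat\Omega_n)=O_p(1)$, whereas the paper cites the smallest eigenvalue of $\widehat\Omega_n$ being bounded away from zero. Your aside that replacing $\widehat\Omega_n$ by its symmetric part leaves the quadratic form unchanged is not accurate, since the form depends on the symmetric part of $\widehat\Omega_n^{-1}$; this is harmless, because $\widehat\Omega_n^{-1}\to\Omega_\tau^{-1}$ in operator norm gives the same lower bound directly.
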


\begin{remark}
The general statistic $D_n^{\mathrm{LR}}$ is the theoretically appropriate statistic under general spatial dependence. The simpler statistic $D_n$ is a special case. In particular, the presence of strong spatial mixing in the regression errors does not by itself justify replacing a long-run covariance by $\sigma^2\Thetaop_\tau$; condition \eqref{eq:special} must be separately justified.
\end{remark}

\section{Simulation study}
\label{sec:simulation}

We consider a two-dimensional lattice $\Ical_{\bm n}=\{1,\ldots,n_1\}\times\{1,\ldots,n_2\}$ with $N_n=n_1n_2$ sites and smooth functional covariates observed on a regular grid, generated recursively according to
\[
X_{\bm i}(t_j)
=
\widetilde\beta X_{\bm i}(t_{j-1})+\varepsilon_{\bm i,j},\qquad \bm i\in \Ical_{\bm n},
\qquad
\widetilde\beta=\exp\!\left(-\frac5{p-1}\right),
\]
with $X_{\bm i}(0)=0$, $t_j=\dfrac{j}{p}$ and $p=1000$. We consider two error mechanisms.

\begin{enumerate}[label=\textbf{Model \arabic*:},leftmargin=2.5em]
\item Gaussian innovations and Gaussian errors:
\[
\varepsilon_j\sim \mathcal{N}(0,\Sigma),
\qquad
\xi\sim \mathcal{N}(0,\Sigma^1)\qquad \text{with}\qquad \Sigma^1_{ij}
=\sigma^2\exp\left(-\log(N_n)\norm{\bm{i}_i-\bm{i}_j}_2\right).
\]

\item  The innovation components are generated as a pseudo-random vector following a multivariate uniform distribution with a zero mean vector and covariance matrix $\Sigma$. This is implemented using the draw.d.variate.uniform function from the R package MultiRNG. The regression errors \(\xi _{i}\) follow a Student's \(t\)-distribution with either 3 or 200 degrees of freedom (DOF). For the latter cases, the variance is scaled as \(\sigma^2 = \text{DOF}/(\text{DOF} - 2)\).
\end{enumerate}
The spatial covariance $\Sigma$ of the innovations is
\[
\Sigma_{ij}
=
\frac{49}{40}(1-\widetilde\beta^2)
\exp\{-a\norm{\bm{i}_i-\bm{i}_j}_2\}.
\] The spatial dependence is varied through the parameter $a\in\{1,200\}$, with $a=1$ representing strong spatial dependence and $a=200$ representing a nearly independent spatial configuration. We use four impact points,
\[
\tau=(0.166,0.333,0.666,0.833)^\top,
\]
and the spatial nuisance $r(s)=\sin(\|s\|_\infty)$. Under $H_0$, $\etaV=0$. Under the local alternative,
\[
\etaV=\frac{60}{\sqrt{N_n}}(-6,6,-5,5)^\top.
\]
The regularization sequences \(U_{n}\) and \(\omega _{n}\) are selected via 10-fold cross-validation based on the mean squared error of prediction, following the approach of \citet{bouka5}. Meanwhile, the bandwidth parameter \(H_{n}\) for spatial smoothing is selected by minimizing the generalized cross-validation (GCV) criterion, as detailed in \citet{bouka5}.

Under $H_1$, $\sigma^2$ is controlled by the signal-to-noise ratio (snr) (see \citet{bouka4}) and $\widehat{\boldsymbol{\tau}}$ is obtained from the impact point estimation algorithm (\cite{liebletal}). $\widehat{\sigma}_{n}^2$ is computed by using the function "fregre.basis" of the R package fda.usc. The local alternative is deliberately used to assess finite-sample sensitivity; it is not the fixed alternative appearing in Proposition~\ref{prop:consistency}.

\begin{table}[h!]
\centering
\caption{Empirical rejection frequencies under $H_0$ for Gaussian errors.}
\label{tab:level1}
\begin{tabular}{llcccc}
\toprule
$a$ & $\sigma$ & \multicolumn{2}{c}{$D_n$} & \multicolumn{2}{c}{$S_n$}\\
\cmidrule(lr){3-4}\cmidrule(lr){5-6}
&& $N_n=100$ & $N_n=500$ & $N_n=100$ & $N_n=500$\\
\midrule
1 & 1 & 0.0575 & 0.0975 & 0.0575 & 0.0950\\
1 & 2 & 0.0125 & 0.0500 & 0.0100 & 0.0500\\
200 & 1 & 0.0300 & 0.0350 & 0.0300 & 0.0350\\
200 & 2 & 0.0350 & 0.0375 & 0.0300 & 0.0375\\
\bottomrule
\end{tabular}
\end{table}

\begin{table}[h!]
\centering
\caption{Empirical rejection frequencies under the local Gaussian alternative.}
\label{tab:power1}
\begin{tabular}{llcccc}
\toprule
$a$ & SNR & \multicolumn{2}{c}{$D_n$} & \multicolumn{2}{c}{$S_n$}\\
\cmidrule(lr){3-4}\cmidrule(lr){5-6}
&& $N_n=100$ & $N_n=500$ & $N_n=100$ & $N_n=500$\\
\midrule
1 & 0.05 & 0.6250 & 0.9850 & 0.6250 & 0.9850\\
1 & 0.10 & 0.7500 & 1.0000 & 0.7475 & 1.0000\\
200 & 0.05 & 0.7375 & 0.9850 & 0.7350 & 0.9850\\
200 & 0.10 & 0.8425 & 1.0000 & 0.8425 & 1.0000\\
\bottomrule
\end{tabular}
\end{table}

\begin{table}[h!]
\centering
\caption{Empirical rejection frequencies under $H_0$ for Student errors.}
\label{tab:level2}
\begin{tabular}{llcccc}
\toprule
$a$ & DOF & \multicolumn{2}{c}{$D_n$} & \multicolumn{2}{c}{$S_n$}\\
\cmidrule(lr){3-4}\cmidrule(lr){5-6}
&& $N_n=100$ & $N_n=500$ & $N_n=100$ & $N_n=500$\\
\midrule
1 & 3 & 0.0350 & 0.0525 & 0.0350 & 0.0525\\
1 & 200 & 0.0575 & 0.0875 & 0.0575 & 0.0850\\
200 & 3 & 0.0250 & 0.0350 & 0.0250 & 0.0325\\
200 & 200 & 0.0275 & 0.0250 & 0.0275 & 0.0225\\
\bottomrule
\end{tabular}
\end{table}

\begin{table}[h!]
\centering
\caption{Empirical rejection frequencies under the local Student alternative.}
\label{tab:power2}
\begin{tabular}{llcccc}
\toprule
$a$ & SNR & \multicolumn{2}{c}{$D_n$} & \multicolumn{2}{c}{$S_n$}\\
\cmidrule(lr){3-4}\cmidrule(lr){5-6}
&& $N_n=100$ & $N_n=500$ & $N_n=100$ & $N_n=500$\\
\midrule
1 & 0.05 & 1.0000 & 1.0000 & 1.0000 & 1.0000\\
1 & 0.10 & 1.0000 & 1.0000 & 1.0000 & 1.0000\\
200 & 0.05 & 1.0000 & 1.0000 & 1.0000 & 1.0000\\
200 & 0.10 & 1.0000 & 1.0000 & 1.0000 & 1.0000\\
\bottomrule
\end{tabular}
\end{table}

The reported $D_n$ and $S_n$ procedures have the same rejection decisions when critical values are transformed consistently, because $S_n$ is an affine transformation of $D_n$. We therefore regard $D_n$ as the primary statistic. The results show substantial power under the chosen local alternatives. Size distortions remain possible in small samples, especially under strong dependence, which supports the use of resampling as a sensitivity analysis rather than relying on asymptotics alone.

\section{Application to the Canadian weather data}
\label{sec:realdata}

\subsection{Data and model}

We illustrate the proposed testing procedure on the Canadian weather data introduced by \citet{ramsay} and subsequently used extensively in the functional and spatial functional data literature. The data consist of daily mean temperature and precipitation measurements recorded at 35 weather stations in Canada. For each station, the daily measurements have been averaged over the period 1960--1994, resulting in one annual temperature curve and one annual precipitation curve per station. The data also contain the geographical coordinates and the climatic region of each station. The dataset is available in the \texttt{fda} R package and has been used in several studies of spatial functional data \citep{giraldoetal12}.

The geographical distribution of the stations makes these data particularly suitable for the present study. Indeed, temperature profiles observed at geographically close stations tend to be more similar than profiles observed at distant stations, so that treating the observations as independent may be inappropriate. This spatial structure has already motivated several methodological developments based on these data \citep{giraldoetal12,daboniangetal16}.

The scientific objective of the present application is different from prediction or clustering. We investigate whether specific periods of the annual temperature cycle have an additional effect on the annual amount of precipitation after accounting for the complete temperature trajectory and for the spatial effect associated with the geographical location of the station.

More precisely, let $\bm{i}_j\in\mathbb{R}^2$ denote the geographical coordinates of station $j$, and let

$$
X_{\bm{i}_j}(t),\qquad t\in[0,1],
$$

denote its annual mean temperature curve, where $t=0$ corresponds to the beginning of the year and $t=1$ to the end of the year. We define the scalar response by the annual accumulated precipitation

$$
Y_{\bm{i}_j}
=
\int_0^1 P_{\bm{i}_j}(t)\,dt,
$$

where $P_{\bm{i}_j}(t)$ denotes the daily precipitation profile after the appropriate rescaling of the time axis. We fit
\begin{equation}
\label{eq:canadian}
Y_{\bm{i}_j}=\langle\Psi,X_{\bm{i}_j}\rangle+\etaV^\top X_{\bm{i}_j,\tau}+r(s_{\bm{i}_j})+\xi_{\bm{i}_j}.
\end{equation}
The first term represents the global contribution of the annual temperature profile to annual precipitation, while the second term represents additional effects associated with particular days or periods of the annual temperature cycle. The function $r$ accounts for residual spatial heterogeneity that is not explained by the temperature trajectory.

This formulation provides a direct interpretation of the point-impact hypothesis:
\begin{equation}
\label{eq:canadian-hypothesis}
H_0:\boldsymbol{\eta}=0
\end{equation}
means that, after accounting for the complete temperature trajectory and for spatial heterogeneity, no particular temperature measurement contributes an additional linear effect to annual precipitation.

\subsection{Implementation}
The temperature trajectories are represented on the 365-day grid and smoothed using a B-spline functional representation. Candidate impact points are placed on a fine grid in $(0,1)$ and ranked by the absolute empirical correlation between $Y$ and the corresponding temperature evaluation. Four candidates are retained. The number and locations of possible point impacts are first estimated using the point-impact methodology described in \citet{liebletal}. The candidate impact locations are expressed as fractions of the annual cycle, so that an estimated value $\widehat\tau=0.25$, for example, corresponds approximately to the beginning of April.

For each candidate value of $\widehat\tau$, the functional slope and point-impact coefficients are estimated using the regularized covariance-operator procedure described in Section~\ref{sec:method}, where the regularization sequences are selected by 10-fold cross validation based on the mean standard error of prediction defined as in \citet{bouka5}. The spatial nuisance function is then estimated by the local-linear estimator

$$
\widehat r(s)
=
e_1^\top
\left(
X_s^\top W_sX_s
\right)^{-1}
X_s^\top W_sT,
$$

with bandwidth selected by leave-one-out cross-validation.

The final test statistic is


\[D_n=\frac{N_n}{\widehat\sigma_n^2}\widehat K_n^\top\widehat\Theta_{\widehat\tau,n}^{-1}\widehat K_n,\]

as defined in \eqref{eq:D}. Under the null hypothesis and the asymptotic conditions established in Section~\ref{sec:assumptions},

$$
D_n\overset{\mathcal D}{\longrightarrow}\chi_A^2.
$$

The analysis is conducted at the nominal significance levels $1\%$.

The reported implementation gives
\begin{equation}
\widehat A=4,
\qquad D_n=32.75796,
\qquad p_{\mathrm{asymp}}=1.33882\times10^{-6}.
\end{equation}
Therefore the null hypothesis of no point-impact effect is rejected at the 1\% level under the simplified chi-square calibration.

A paired station-level resampling calculation gives an empirical bootstrap tail probability of approximately $0.002$. Because this resampling calculation does not impose the null and does not re-estimate all nuisance quantities within every replication, we report it as a sensitivity diagnostic rather than as a theoretically calibrated bootstrap $p$-value.

\begin{table}[h!]
\centering
\caption{Global test for point-impact effects in the Canadian weather application.}
\label{tab:canadian}
\begin{tabular}{lccc}
\toprule
Procedure & Statistic & Calibration & Tail probability\\
\midrule
Asymptotic & 32.75796 & $\chi^2_4$ & $1.33882\times10^{-6}$\\
Bootstrap sensitivity & 32.75796 & Station-level resampling & $0.002$\\
\bottomrule
\end{tabular}
\end{table}

The rejection indicates that, within the fitted model, the selected temperature evaluations contain information about annual precipitation that is not fully represented by the global functional component and the spatial nuisance function. This finding is associational and should not be interpreted causally.

Although the empirical results strongly support the presence of localized functional effects, three methodological aspects deserve a nuanced discussion. First, the available sample size ($N_n = 35$ weather stations) is relatively modest for complex spatial functional frameworks. Consequently, the bootstrap sensitivity diagnostic reported in Table \ref{tab:canadian} serves as a crucial small-sample safeguard, showing that our framework maintains robust size control even when asymptotic calibrations might be close to their boundaries. 

Second, the four candidate impact points were selected via a standard empirical screening rule (correlation ranking), which acts here as a computationally efficient heuristic. While this screening is common in practical applications \citet{liebletal}, extending our consistency theorems to encompass both the simultaneous data-driven selection and inference of $A$ remains an important theoretical objective. 

Finally, the application relies on the simplified statistic $D_n$, implicitly assuming the score-uncorrelated condition (20). Although this assumption is plausible under short-range spatial dependencies in regression errors, a fully general implementation under arbitrary spatial correlation would require the long-run covariance statistic $D_n^{\text{LR}}$, which motivates the development of automated spatial bandwidth selectors in practice.


\section{Conclusion}
\label{sec:conclusion}

In this paper, we introduced a semiparametric framework to test for localized point-impact effects within a spatial semi-functional linear regression model containing an unknown spatial nuisance surface. Our proposed test statistic builds upon a residualized point-evaluation score that successfully adapts to data-dependent impact locations. This approach coordinates Tikhonov regularization for the infinite-dimensional functional component and local-linear smoothing for the spatial nuisance.Under a spatial mixing regime, we established the multivariate asymptotic normality of the residualized score under the null hypothesis. We distinguished between a general long-run covariance normalization—which remains valid under arbitrary spatial score dependence—and a simplified chi-square calibration requiring for the regression errors, an explicit short-range correlation assumption. Simulation results and an application to Canadian weather data confirm that the test possesses satisfying finite-sample power and size sensitivity. Several avenues for future research emerge from this work. First, developing a fully data-driven theory to simultaneously estimate both the number and locations of the impact points remains an open challenge. Second, constructing a data-driven, feasible long-run covariance estimator under general spatial configurations would improve the practical applicability of the general statistic \(D_n^{\mathrm{LR}}\). Finally, extending this framework to a high-dimensional regime where \(A = A_n \to \infty\) could allow the standardized chi-square statistic to enjoy a valid Gaussian approximation.




\begin{appendix}
\label{sec:appendix}

\section{Technical Proofs}
\label{sec:proofs}

Throughout this appendix, $C$ denotes generic finite positive constants whose values may change from line to line.

\subsection{Preliminary Lemmas}
\label{subsec:proof-lemmas-prelim}

We first state and recall Lemma \ref{len:tech1}, which serves as the foundation for controlling the rates of the regularized and nuisance estimators.

\begin{lemma} 
\label{len:tech1}
Under $H_0$, if Assumptions \ref{as:error}, \ref{as:functional}, \ref{as:identification}, \ref{as:nuisance}, \ref{as:kernel} and \ref{as:regularization} are satisfied, then
\begin{itemize}
\item[(i)]
\[\E\left(\norm{\Psi-\widehat\Psi_n}_{\Gamma}^2\right) 
=O\left(\dfrac{U_n^2}{\omega_n^2} \right)+O\left(\dfrac{(\log N_n)^2}{\omega_n^2U_n^2N_n}\right)
\]	
\item[(ii)] \[\E\left(\norm{\Gammaop_n-\Gammaop}^2_{\mathrm{op}}\right)=O\left(\dfrac{\log N_n}{N_n}\right)\]
\item[(iii)]
\[\sup_{s\in[0,1]^d}\E\left[\left(r(s)-\widehat r(s)\right)^2\right]=O\left(H_n^4 \right)+O\left(\dfrac{(\log N_n)^2}{\omega_n^2U_n^2N_n}\right)+O\left(U_n^2\right)+O\left(\dfrac{\log N_n}{N_nH_n^d}\right) \]
\end{itemize}	
\end{lemma}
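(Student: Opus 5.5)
I would prove the three parts in the order (ii), (i), (iii), because (ii) is the basic ingredient of the other two. Part (i) rests on a deterministic decomposition of the regularized estimator, and part (iii) reuses (i).

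\textbf{Part (ii).} I work in the Hilbert--Schmidt norm, which dominates the operator norm. Expanding gives
\[
\E\norm{\Gammaop_n-\Gammaop}_{HS}^2=\frac1{N_n^2}\sum_{\bm i,\bm j}\E\langle X_{\bm i}\otimes X_{\bm i}-\Gammaop,\,X_{\bm j}\otimes X_{\bm j}-\Gammaop\rangle_{HS}.
\]
Since $\norm{X_{\bm 0}}\le C_X$, each summand is a covariance of bounded Hilbert-space valued variables. By Davydov's covariance inequality for $\alpha$-mixing fields, it is bounded by $C\alpha_{1,1}(\norm{\bm i-\bm j}_\infty)$. Counting lattice pairs at distance $u$ (about $N_n u^{d-1}$ of them) and using (\ref{A1}) already gives $O(1/N_n)$. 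The stated $\log N_n/N_n$ is weaker, so it leaves room for a truncation or blocking argument if only the rate (\ref{A2}) is used. The same argument, applied to $K_{YX,n}$ and to the cross terms, gives $\E\norm{K_{YX,n}-K_{YX}}^2=O(\log N_n/N_n)$.

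\textbf{Part (i).} Under $H_0$ we have $K_{YX}=\Gammaop\Psi$, and $\widehat\Psi_n$ is built from $\Lambda_{\Psi,n}$, which is a perturbation of $\Gammaop$ by the Schur-type term involving $\widehat\Theta^{-1}_{\widehat\tau,n}$. I write $\Lambda_\Psi$ for its population analogue. The error then splits into a bias part and a variance part:
\[
\Psi-\widehat\Psi_n
=U_n(\Lambda_{\Psi}+U_nI)^{-1}\Psi
+(\Lambda_{\Psi}+U_nI)^{-1}(\Lambda_{\Psi}-\Lambda_{\Psi,n})\widehat\Psi_n
-(\Lambda_{\Psi,n}+U_nI)^{-1}\Delta_n .
\]
Here $\Delta_n$ collects the empirical errors in $K_{YX,n}$, $K_{YX_{\widehat\tau},n}$, $K_{X_{\widehat\tau}X,n}$ and $\widehat\Theta_{\widehat\tau,n}$. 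These are handled by (ii), by Lemma~\ref{lem:theta}, and by (\ref{eq:impactrate}) for the effect of $\widehat\tau$. The noise term $N_n^{-1}\sum\xi_{\bm i}X_{\bm i}$ has second moment $O(1/N_n)$ by the summability of $\varphi$ in Assumption~\ref{as:error}.

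I measure the error in the $\Gammaop$-seminorm, where $\Gammaop^{1/2}(\Lambda+U_nI)^{-1}$ has norm at most $CU_n^{-1/2}$. The Tikhonov inverse $(\Gammaop_n+\omega_nI)^{-1}$ contributes a factor $\omega_n^{-1}$. Combining these, the bias term is $O(U_n/\omega_n)$ and the variance term is $O\bigl(\log N_n/(\omega_nU_n\sqrt{N_n})\bigr)$; squaring gives the stated rate.

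\textbf{Part (iii).} I decompose
\[
T_{\bm i}=r(s_{\bm i})+\xi_{\bm i}+\langle\Psi-\widehat\Psi_n,X_{\bm i}\rangle_G-\widehat\etaV_n^\top X_{\bm i,\widehat\tau},
\]
and use the fact that the local-linear weights reproduce linear functions exactly. The error $\widehat r(s)-r(s)$ then splits into four pieces:
\begin{itemize}
\item a deterministic bias, which is $O(H_n^2)$ from a second-order Taylor expansion, uniformly in $s$ because the local-linear fit has no boundary bias;
\item a weighted average of $\xi_{\bm i}$, with variance $O(1/(N_nH_n^d))$ under summable correlations, giving the $\log N_n/(N_nH_n^d)$ term after accounting for dependence;
\item a weighted average of $\langle\Psi-\widehat\Psi_n,X_{\bm i}\rangle$, bounded via Cauchy--Schwarz by $\norm{\Psi-\widehat\Psi_n}_{\Gammaop}^2$ plus the error of $\Gammaop_n$; this reuses (i) and contributes the $U_n^2$ and $(\log N_n)^2/(\omega_n^2U_n^2N_n)$ terms;
\item the term in $\widehat\etaV_n$, which is bounded the same way: under $H_0$ its numerator is the same empirical error as in (i), and its denominator is invertible with probability tending to one by Assumption~\ref{as:identification}.
\end{itemize}
The design matrix $\frac1{N_n}\bm X_{s}^\top W_{s}\bm X_{s}$ converges to a nonsingular deterministic matrix uniformly in $s$ by a Riemann-sum argument, because the sites are fixed.

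\textbf{Main obstacle.} The hardest step is the variance term in (i). Its random inverse $(\Lambda_{\Psi,n}+U_nI)^{-1}$ is correlated with the numerator, so taking expectations requires either a uniform bound $\norm{(\Lambda_{\Psi,n}+U_nI)^{-1}}\le U_n^{-1}$, which holds because $\Lambda_{\Psi,n}\ge 0$ as a Schur complement, or a bound via $\omega_n^{-1}$. Spatial dependence has to be handled inside squared norms through mixing covariance inequalities. I would first apply the crude bound $U_n^{-1}$ together with (ii), and check that the extra factor $\omega_n^{-1}$ arises from replacing $\Gammaop^{-1}$ by $\widehat\Gamma_n^{-1}$ in $\widehat\etaV_n$.
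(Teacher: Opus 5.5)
Your overall route (an algebraic bias/variance split, mixing covariance bounds, and a local-linear error decomposition) is what the paper's one-line proof points to by deferring to Bouka et al.\ (2024b). However, the rate in (i) is not actually derived. The estimator \eqref{eq:psi-hat} involves only $U_n$ and the unregularized $\widehat\Theta_{\widehat\tau,n}^{-1}$. The Tikhonov inverse $(\Gammaop_n+\omega_nI_G)^{-1}$ appears only in $\widehat\etaV_n$, which does not enter $\widehat\Psi_n$. So the factor $\omega_n^{-1}$ you invoke has no source, and ``the bias term is $O(U_n/\omega_n)$'' is asserted rather than derived. The true bias is $U_n(\Lambda_\Psi+U_nI_G)^{-1}\Psi$, and without a source condition on $\Psi$ it need not have the claimed order. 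Since $\Gammaop\ge\Lambda_\Psi$, its squared $\Gammaop$-norm is at least $\sum_k U_n^2\mu_k\psi_k^2/(\mu_k+U_n)^2$, where $(\mu_k,v_k)$ are the eigenpairs of $\Lambda_\Psi$ and $\psi_k=\langle\Psi,v_k\rangle_G$. For $\mu_k=k^{-2}$ and $\psi_k^2=k^{-1-\epsilon}$ with $0<\epsilon<2$, this is of exact order $U_n^{1+\epsilon/2}$, which is not $O(U_n^2/\omega_n^2)$ under \eqref{eq:ratefamily}. You need an added hypothesis such as $\Psi\in\mathrm{Range}(\Lambda_\Psi)$, which gives $O(U_n^2)$.

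Your displayed decomposition is also not an identity: the sign of the middle term is wrong and the inverses are mixed. Let $\Delta_n$ denote the numerator of \eqref{eq:psi-hat} minus $\Lambda_\Psi\Psi$, and set $B_n=\Lambda_{\Psi,n}+U_nI_G$. The correct identity is $\Psi-\widehat\Psi_n=U_nB_n^{-1}\Psi+B_n^{-1}(\Lambda_{\Psi,n}-\Lambda_\Psi)\Psi-B_n^{-1}\Delta_n$. Your crude bound $\norm{B_n^{-1}}_{\op}\le U_n^{-1}$ is valid, since $\Lambda_{\Psi,n}$ is an empirical residual covariance. It then gives $O(\log N_n/(N_nU_n^2))$ for the last two terms without any $\omega_n$, provided $\E\norm{\Delta_n}^2$ and $\E\norm{\Lambda_{\Psi,n}-\Lambda_\Psi}_{\op}^2$ are $O(\log N_n/N_n)$.

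That proviso is the second gap. The lemma bounds expectations, but your tools for $\widehat\tau$ and the random inverses are all $O_p$/$o_p$ statements: Lemma~\ref{lem:theta}, \eqref{eq:impactrate}, and ``invertible with probability tending to one''. They also rest on Assumptions~\ref{as:mixing} and \ref{as:impact}, which are not hypotheses of this lemma. In-probability rates do not bound second moments. $\Delta_n$ contains $\widehat\Theta_{\widehat\tau,n}^{-1}$ and evaluations at a data-dependent $\widehat\tau$, so the fixed-$\tau$ computation of (ii) does not apply. The denominator of $\widehat\etaV_n$ is a random matrix with no control on the moments of its inverse. You need either moment bounds on the inverse smallest eigenvalues together with moment versions of \eqref{eq:impactrate}--\eqref{eq:impactprob}, or an exponential inequality making the bad event small enough to be absorbed by a crude deterministic bound.

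In (iii), ``reusing (i)'' would import $U_n^2/\omega_n^2$, which under \eqref{eq:ratefamily} dominates every term on the right-hand side of (iii). The stated $U_n^2$ term therefore needs the sharper bias bound above. Moreover, passing from a localized weighted average of $\langle\Psi-\widehat\Psi_n,X_{\bm i}\rangle_G$ to $\norm{\Psi-\widehat\Psi_n}_{\Gamma}$ is not a plain Cauchy--Schwarz step. It requires controlling the local empirical covariance against the random $\Psi-\widehat\Psi_n$. That produces a cross term like the $\sqrt{\log N_n}/(U_n^2\sqrt{N_n})$ the paper itself picks up in $F_1$, with $N_n$ replaced by $N_nH_n^d$, which is not among the stated terms.

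Three smaller mismatches remain:
\begin{itemize}
\item The $O(H_n^2)$ bias needs $r\in C^2$, whereas Assumption~\ref{as:nuisance} only gives H\"older $\beta>d/4$, hence squared bias $O(H_n^{2\beta})$.
\item \eqref{A1} concerns the score field $Z_{\bm i}$, not $\{X_{\bm i}\}$.
\item The bound $\E\norm{N_n^{-1}\sum_{\bm i}\xi_{\bm i}X_{\bm i}}^2=O(1/N_n)$ needs $\E(\xi_{\bm i}X_{\bm i})=0$ in $G$ and independence (or joint mixing) of $\xi$ and $X$, not just summability of $\varphi$.
\end{itemize}
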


\begin{proof}
The proof follows an algebraic decomposition similar to the proofs of Theorem 3.1 and Corollary 3.2 in Bouka et al. (2024b) by adapting the spatial block bounds. 
\end{proof}

\subsection{Proof of Lemma~\ref{lem:theta} (Consistency of Covariance Matrices)}

\begin{proof}
We decompose the error term by adding and subtracting the empirical covariance matrix evaluated at the true impact locations $\tau$:
\[
\widehat\Theta_{\widehat\tau,n}-\Thetaop_\tau
=
\left(\widehat\Theta_{\widehat\tau,n}-\widehat\Theta_{\tau,n}\right)
+
\left(\widehat\Theta_{\tau,n}-\Thetaop_\tau\right).
\]
Under Assumptions \ref{as:functional} and \ref{as:identification}, Theorem 3.1 in \citet{bouka5} ensures that: 
\[
\norm{\widehat\Theta_{\tau,n}-\Thetaop_\tau}_{\op}=o_p(1).
\]
It remains to control the perturbation caused by replacing $\tau$ by $\widehat\tau$.

By Assumption~\ref{as:impact},
\[
\frac1{N_n}\sum_{\bm i\in\Ical_{\bm n}}\left(X_{\bm i}(\widehat\tau_\ell)-X_{\bm i}(\tau_\ell)\right)^2
=O_p\left(\frac1{N_n}\right).
\]
So, form Cauchy-Schwarz inequality, we have for all $k, \ell=1,\cdots,A$
\[\left|\frac1{N_n}\sum_{\bm i\in\Ical_{\bm n}}X_{\bm i}(\widehat\tau_k)\left(X_{\bm i}(\widehat\tau_\ell) - X_{\bm i}(\tau_\ell)\right)\right|\le \left(\frac1{N_n}\sum_{\bm i\in\Ical_{\bm n}}X_{\bm i}(\widehat\tau_k)^2\right)^{1/2}\left(\frac1{N_n}\sum_{\bm i\in\Ical_{\bm n}}\left(X_{\bm i}(\widehat\tau_\ell)-X_{\bm i}(\tau_\ell)\right)^2\right)^{1/2}.\]
Since $A$ is fixed and the point evaluations are almost surely bounded, then each entry of
\[
\widehat\Theta_{\widehat\tau,n}-\widehat\Theta_{\tau,n}=\frac1{N_n}\sum_{\bm i\in\Ical_{\bm n}}X_{\bm i,\widehat\tau}\left(X_{\bm i,\widehat\tau} - X_{\bm i, \tau}\right)^\top + \frac1{N_n}\sum_{\bm i\in\Ical_{\bm n}}\left(X_{\bm i,\widehat\tau} - X_{\bm i, \tau}\right)X_{\bm i, \tau}^\top
\]
is $O_p\left(N_n^{-1/2}\right)=o_p(1)$. Hence
\[
\norm{\widehat\Theta_{\widehat\tau,n}-\widehat\Theta_{\tau,n}}_{\op}=o_p(1),
\]
and therefore
\[
\norm{\widehat\Theta_{\widehat\tau,n}-\Thetaop_\tau}_{\op}=o_p(1).
\]
The cross-covariance quantities are handled in exactly the same way, using identical algebraic decompositions, the same Assumption \ref{as:impact} and the fixed dimension $A$.
\end{proof}

\subsection{Proof of Lemma \ref{lem:variance} (Rates of Estimators)}
\begin{proof}
We partition the covariance structure of the variance into diagonal ($F_1$), close-range ($F_2$), and long-range mixing blocks ($F_3$) over a spatial grid layout:
\begin{align*}
&Var\left(\dfrac{1}{\sqrt{N_n}}\sum_{\bm i\in\Ical_{\bm n}}\langle\Psi-\widehat\Psi_n,X_{\bm i}\rangle_G \bm b^\top X_{\bm i,\widehat\tau}\right)=\dfrac{1}{N_n}\sum_{\bm i\in\Ical_{\bm n}}Var\left(\langle\Psi-\widehat\Psi_n,X_{\bm i}\rangle_G \bm b^\top X_{\bm i,\widehat\tau}\right)\\
&\qquad+\dfrac{1}{N_n}\sum_{\stackrel{\bm i, \bm j\in\Ical_{\bm n}}{0<\norm{\bm i-\bm j}_\infty\le Q_n} }Cov\left(\langle\Psi-\widehat\Psi_n,X_{\bm i}\rangle_G \bm b^\top X_{\bm i,\widehat\tau},\langle\Psi-\widehat\Psi_n,X_{\bm j}\rangle_G \bm b^\top X_{\bm j,\widehat\tau}\right)\\
&\qquad+\dfrac{1}{N_n}\sum_{\stackrel{\bm i, \bm j\in\Ical_{\bm n}}{\norm{\bm i-\bm j}_\infty> Q_n} }Cov\left(\langle\Psi-\widehat\Psi_n,X_{\bm i}\rangle_G \bm b^\top X_{\bm i,\widehat\tau},\langle\Psi-\widehat\Psi_n,X_{\bm j}\rangle_G \bm b^\top X_{\bm j,\widehat\tau}\right)\\
&=: F_1+F_2+F_3.
\end{align*}
Under Assumptions \ref{as:error}, \ref{as:functional}, \ref{as:identification}  and Lemma 7.2 in \citet{bouka5}, the regularized functional estimator error fulfills:
\begin{align}\label{import2}
\E\left(\norm{\Psi-\widehat\Psi_n}^4\right)&\le C+\E\left[\norm{(\Lambda_{\Psi,n}+U_nI_G)^{-1}
	\left[
	K_{YX,n}
	-
	K_{X_{\widehat\tau}X,n}^\top
	\widehat\Theta_{\widehat\tau,n}^{-1}
	K_{YX_{\widehat\tau},n}
	\right]}^4\right]\nonumber\\
&\le C+\E\left[\norm{(\Lambda_{\Psi,n}+U_nI_G)^{-1}}_{\mathrm{op}}^4\norm{
	\left[
	K_{YX,n}
	-
	K_{X_{\widehat\tau}X,n}^\top
	\widehat\Theta_{\widehat\tau,n}^{-1}
	K_{YX_{\widehat\tau},n}
	\right]}^4\right]\nonumber\\
&=O\left(\dfrac{1}{U_n^4}\right).
\end{align} 		
Then, from Lemma \ref{len:tech1} $(i)$ and $(ii)$, the diagonal block satisfies: 
\begin{align}\label{import3}
F_1&:=\dfrac{1}{N_n}\sum_{\bm i\in\Ical_{\bm n}}Var\left(\langle\Psi-\widehat\Psi_n,X_{\bm i}\rangle_G \bm b^\top X_{\bm i,\widehat\tau}\right)\nonumber\\
&\le\dfrac{C}{N_n}\sum_{\bm i\in\Ical_{\bm n}}\E\left(\langle\Psi-\widehat\Psi_n,X_{\bm i}\rangle_G^2 \right)\nonumber\\
&\le C\E\left(\left\langle\Psi-\widehat\Psi_n,\Gammaop_n(\Psi-\widehat\Psi_n)\right\rangle_G\right)\nonumber\\
&\le C\left[\E\left(\norm{\Psi-\widehat\Psi_n}_{\Gamma}^2\right)+\E\left(\norm{\Psi-\widehat\Psi_n}^2\norm{\Gammaop_n-\Gammaop}_{\mathrm{op}}\right)\right]\nonumber\\
&=O\left(\dfrac{U_n^2}{\omega_n^2} \right)+O\left(\dfrac{(\log N_n)^2}{\omega_n^2U_n^2N_n}\right)+O\left(\dfrac{\sqrt{\log N_n}}{U_n^2\sqrt{N_n}}\right)=o(1).
\end{align}
By applying the Cauchy-Schwarz inequality to the short-range spatial covariance matrix block $F_2$, we have: 
\begin{align*}
F_2&:=\dfrac{1}{N_n}\sum_{\stackrel{\bm i, \bm j\in\Ical_{\bm n}}{0<\norm{\bm i-\bm j}_\infty\le Q_n} }Cov\left(\langle\Psi-\widehat\Psi_n,X_{\bm i}\rangle_G \bm b^\top X_{\bm i,\widehat\tau},\langle\Psi-\widehat\Psi_n,X_{\bm j}\rangle_G \bm b^\top X_{\bm j,\widehat\tau}\right)\\
&\le \dfrac{1}{N_n}\sum_{\stackrel{\bm i, \bm j\in\Ical_{\bm n}}{0<\norm{\bm i-\bm j}_\infty\le Q_n} }\sqrt{Var\left(\langle\Psi-\widehat\Psi_n,X_{\bm i}\rangle_G \bm b^\top X_{\bm i,\widehat\tau}\right)}\sqrt{Var\left(\langle\Psi-\widehat\Psi_n,X_{\bm j}\rangle_G \bm b^\top X_{\bm j,\widehat\tau}\right)}\\
&\le C F_1\sum_{\ell=1}^{Q_n}\ell^{d-1}\\
&=O\left(\dfrac{Q_n^dU_n^2}{\omega_n^2} \right)+O\left(\dfrac{Q_n^d(\log N_n)^2}{\omega_n^2U_n^2N_n}\right)+O\left(\dfrac{Q_n^d\sqrt{\log N_n}}{U_n^2\sqrt{N_n}}\right) 
\end{align*}
Setting the spatial block bandwidth sequence to $Q_n = \left\lfloor \left(\omega_n/U_n\right)^{1/d}] \right\rfloor$ with $U_n^6 N_n \to +\infty$ ensures that:
\begin{align}\label{import4}
F_2=O\left(\dfrac{U_n}{\omega_n} \right) + \left(\dfrac{(\log N_n)^2}{\omega_nU_n^{3}N_n}\right)+O\left(\dfrac{\omega_n\,\sqrt{\log N_n}}{U_n^{3}\,\sqrt{N_n}}\right)=o(1).
\end{align}
For the long-range component $F_3$, applying Lemma~2.1 in \citet{tran} alongside the bound \eqref{import2} yields: 
\begin{align*}
&E_{\bm i,\bm j}:=Cov\left(\langle\Psi-\widehat\Psi_n,X_{\bm i}\rangle_G \bm b^\top X_{\bm i,\widehat\tau},\langle\Psi-\widehat\Psi_n,X_{\bm j}\rangle_G \bm b^\top X_{\bm j,\widehat\tau}\right)\\
&\le C\left[\E\left(\left|\langle\Psi-\widehat\Psi_n,X_{\bm i}\rangle_G \bm b^\top X_{\bm i,\widehat\tau}\right|^4\right)\right]^{1/4}
\left[\E\left(\left|\langle\Psi-\widehat\Psi_n,X_{\bm j}\rangle_G \bm b^\top X_{\bm j,\widehat\tau}\right|^4\right)\right]^{1/4}\left(\alpha_{1,\infty}(\norm{\bm i-\bm j}_\infty)\right)^{1/2}\\
&\le \dfrac{C}{U_n^2}\left(\alpha_{1,\infty}(\norm{\bm i-\bm j}_\infty)\right)^{1/2}.
\end{align*}
Given that the mixing exponent satisfies $\theta > 8d$, the remaining tail sum scales as:
\begin{align}\label{import5}
F_3&:=\dfrac{1}{N_n}\sum_{\stackrel{\bm i, \bm j\in\Ical_{\bm n}}{\norm{\bm i-\bm j}_\infty> Q_n} }Cov\left(\langle\Psi-\widehat\Psi_n,X_{\bm i}\rangle_G \bm b^\top X_{\bm i,\widehat\tau},\langle\Psi-\widehat\Psi_n,X_{\bm j}\rangle_G \bm b^\top X_{\bm j,\widehat\tau}\right)\nonumber\\
&\le \dfrac{C}{U_n^2N_n}\sum_{\stackrel{\bm i, \bm j\in\Ical_{\bm n}}{\norm{\bm i-\bm j}_\infty> Q_n} }\left(\alpha_{1,\infty}(\norm{\bm i-\bm j}_\infty)\right)^{1/2}\nonumber\\
&\le \dfrac{C}{U_n^2}\sum_{\ell=Q_n+1}^{+\infty}\ell^{d-1}\left(\alpha(\ell)\right)^{1/2}\le \dfrac{CQ_n^{d-\theta/2}}{U_n^2} \nonumber\\
&=O\left(\dfrac{U_n^{(\theta-6d)/(2d)}}{\omega_n^{(\theta-2d)/(2d)}}\right)=o(1).
\end{align}
Combining \eqref{import3}, \eqref{import4} and \eqref{import5}, we get
\begin{align*}
Var\left(\dfrac{1}{\sqrt{N_n}}\sum_{\bm i\in\Ical_{\bm n}}\langle\Psi-\widehat\Psi_n,X_{\bm i}\rangle_G \bm b^\top X_{\bm i,\widehat\tau}\right)=o(1).
\end{align*}
The variance in \eqref{eq:nuisancerate} involving $r(s_i) - \widehat{r}(s_i)$ is partitioned into blocks $E_1$, $E_2$, and $E_3$ in an identical manner over a bandwidth $T_n$. That is:
\begin{align*}
&Var\left(\dfrac{1}{\sqrt{N_n}}\sum_{\bm i\in\Ical_{\bm n}}\left(r(s_{\bm i})-\widehat r(s_{\bm i})\right) \bm b^\top X_{\bm i,\widehat\tau}\right)=\dfrac{1}{N_n}\sum_{\bm i\in\Ical_{\bm n}}Var\left(\left(r(s_{\bm i})-\widehat r(s_{\bm i})\right) \bm b^\top X_{\bm i,\widehat\tau}\right)\\
&\qquad + \dfrac{1}{N_n}\sum_{\stackrel{\bm i, \bm j\in\Ical_{\bm n}}{0<\norm{\bm i-\bm j}_\infty\le T_n}} Cov\left(\left(r(s_{\bm i})-\widehat r(s_{\bm i})\right) \bm b^\top X_{\bm i,\widehat\tau},\left(r(s_{\bm j})-\widehat r(s_{\bm j})\right) \bm b^\top X_{\bm j,\widehat\tau}\right)\\
&\qquad + \dfrac{1}{N_n}\sum_{\stackrel{\bm i, \bm j\in\Ical_{\bm n}}{\norm{\bm i-\bm j}_\infty> T_n}} Cov\left(\left(r(s_{\bm i})-\widehat r(s_{\bm i})\right) \bm b^\top X_{\bm i,\widehat\tau},\left(r(s_{\bm j})-\widehat r(s_{\bm j})\right) \bm b^\top X_{\bm j,\widehat\tau}\right)\\
&:=E_1+E_2+E_3.
\end{align*}
By applying Lemma \ref{len:tech1} $(iii)$, the diagonal trend components yield:
\begin{align}\label{import7}
E_1&:=\dfrac{1}{N_n}\sum_{\bm i\in\Ical_{\bm n}}Var\left(\left(r(s_{\bm i})-\widehat r(s_{\bm i})\right) \bm b^\top X_{\bm i,\widehat\tau}\right)\nonumber\\
&\le C\sup_{s\in[0,1]^d}\E\left[\left(r(s)-\widehat r(s)\right)^2\right]\nonumber\\
&=O\left(H_n^4 \right)+O\left(\dfrac{(\log N_n)^2}{\omega_n^2U_n^2N_n}\right)+O\left(U_n^2\right)+O\left(\dfrac{\log N_n}{N_nH_n^d}\right)=o(1)
\end{align}
Choosing the trend bandwidth parameters as $T_n = \left\lfloor \left(\dfrac{1}{U_n H_n}\right)^{1/d} \right\rfloor$ and $H_n = U_n^{1/2}$, the close-range block simplifies to:
\begin{align}\label{import8}
E_2&:=\dfrac{1}{N_n}\sum_{\stackrel{\bm i, \bm j\in\Ical_{\bm n}}{0<\norm{\bm i-\bm j}_\infty\le T_n}} Cov\left(\left(r(s_{\bm i})-\widehat r(s_{\bm i})\right) \bm b^\top X_{\bm i,\widehat\tau},\left(r(s_{\bm j})-\widehat r(s_{\bm j})\right) \bm b^\top X_{\bm j,\widehat\tau}\right)\nonumber\\
&\le \dfrac{1}{N_n}\sum_{\stackrel{\bm i, \bm j\in\Ical_{\bm n}}{0<\norm{\bm i-\bm j}_\infty\le T_n}} \sqrt{Var\left(\left(r(s_{\bm i})-\widehat r(s_{\bm i})\right) \bm b^\top X_{\bm i,\widehat\tau}\right)}\sqrt{Var\left(\left(r(s_{\bm j})-\widehat r(s_{\bm j})\right) \bm b^\top X_{\bm j,\widehat\tau}\right)}\nonumber\\
&\le C\sup_{s\in[0,1]^d}\E\left[\left(r(s)-\widehat r(s)\right)^2\right]\sum_{\ell=1}^{T_n}\ell^{d-1}\nonumber\\
&=O\left(T_n^dH_n^4 \right)+O\left(\dfrac{T_n^d(\log N_n)^2}{\omega_n^2U_n^2N_n}\right) + O\left(T_n^dU_n^2\right) + O\left(\dfrac{T_n^d\log N_n}{N_nH_n^d}\right)\nonumber\\
&= O\left(H_n \right)+O\left(\dfrac{(\log N_n)^2}{\omega_n^2H_n^7N_n}\right)+O\left(\dfrac{\log N_n}{N_nH_n^{d+3}}\right)=o(1).
\end{align}
Moreover, under $H_0$
\begin{align*}
K&:=\widehat r(s_0)-r(s_0)\\
&=e_1^\top
\left(\dfrac{1}{N_n}
\bm{X}_{s_0}^\top W_{s_0}\bm{X}_{s_0}
\right)^{-1}\begin{pmatrix}
	\dfrac{1}{N_nH_n^d}\sum_{\bm i\in\Ical_{\bm n}}\nu\left(\dfrac{s_{\bm i}-s_{0}}{H}\right)\left(B(\bm i)+\xi_{\bm i}^\star\right)\\
	\dfrac{1}{N_nH_n^d}\sum_{\bm i\in\Ical_{\bm n}}\left(\dfrac{s_{\bm i}-s_{0}}{H}\right)\nu\left(\dfrac{s_{\bm i}-s_{0}}{H}\right)\left(B(\bm i)+\xi_{\bm i}^\star\right)
\end{pmatrix},
\end{align*}
where $B(\bm i)=\left(\dfrac{s_{\bm i}-s_{0}}{H}\right)^\top r^{\prime\prime}(s_0)\left(\dfrac{s_{\bm i}-s_{0}}{H}\right)$, $r^{\prime\prime}(s_0)$ denotes the matrix of second order partial derivatives of $r$, and $\xi_{\bm i}^\star=\langle\Psi-\widehat\Psi_n,X_{\bm i}\rangle_G+\xi_{\bm i}$ because under $H_0$, we have $T_{\bm i}=Y_{\bm i}-\langle\widehat\Psi_n,X_{\bm i}\rangle_G=r(s_{\bm i})+\xi_{\bm i}^\star$. Since from Assumption \ref{as:kernel}
\[\lim_{\bm{n}\to +\infty}\left(\dfrac{1}{N_n}
\bm{X}_{s_0}^\top W_{s_0}\bm{X}_{s_0}
\right)^{-1}=\begin{pmatrix}
1&\boldsymbol{0}^\top\\
\boldsymbol{0}&\nu_2^{-1}I_d,
\end{pmatrix},\qquad \left|\left(\dfrac{\dfrac{\bm{i}}{\bm{n+1}}-s_{0}}{H}\right)\right|\le\boldsymbol{1},
\]
$\nu$ is a bounded kernel and from \eqref{import2}, $\E\left(|\xi_{\bm i}^\star|^4\right)\le C\left(\E\left(\norm{\Psi-\widehat\Psi_n}^4\right)+\E\left(|\xi_{\bm i}|^4\right)\right)=O\left(\dfrac{1}{U_n^4}\right)$, then
\begin{align*}
\E\left(|\widehat r(s_0)-r(s_0)|^4\right)=O\left(\dfrac{1}{H_n^{4d}U_n^4}\right).	
\end{align*}
Applying Lemma~2.1 in \citet{tran},
\begin{align*}
&F_{\bm i,\bm j}:=Cov\left(\left(r(s_{\bm i})-\widehat r(s_{\bm i})\right) \bm b^\top X_{\bm i,\widehat\tau},\left(r(s_{\bm j})-\widehat r(s_{\bm j})\right) \bm b^\top X_{\bm j,\widehat\tau}\right)\\
&\le C \left[\E\left(\left|\left(r(s_{\bm i})-\widehat r(s_{\bm i})\right) \bm b^\top X_{\bm i,\widehat\tau}\right|^2\right)\right]^{1/2}
\left[\E\left(\left|\left(r(s_{\bm j})-\widehat r(s_{\bm j})\right)  \bm b^\top X_{\bm j,\widehat\tau}\right|^4\right)\right]^{1/4}\left(\alpha_{1,\infty}(\norm{\bm i-\bm j}_\infty)\right)^{1/4}\\
&\le \dfrac{C}{H_n^dU_n}\left[\sup_{s\in[0,1]^d}\E\left[\left(r(s)-\widehat r(s)\right)^2\right]\right]^{1/2}\left(\alpha_{1,\infty}(\norm{\bm i-\bm j}_\infty)\right)^{1/4}.
\end{align*}
Under $\theta > 8d$, the long-range residual tail satisfies:,
\begin{align}\label{import9}
E_3&=\dfrac{1}{N_n}\sum_{\stackrel{\bm i, \bm j\in\Ical_{\bm n}}{\norm{\bm i-\bm j}_\infty> T_n}} Cov\left(\left(r(s_{\bm i})-\widehat r(s_{\bm i})\right) \bm b^\top X_{\bm i,\widehat\tau},\left(r(s_{\bm j})-\widehat r(s_{\bm j})\right) \bm b^\top X_{\bm j,\widehat\tau}\right)\nonumber\\
&\le\dfrac{C}{H_n^dU_n }\left[\sup_{s\in[0,1]^d}\E\left[\left(r(s)-\widehat r(s)\right)^2\right]\right]^{1/2}\sum_{\ell=T_n+1}^{+\infty} \ell^{d-1}\left(\alpha_{1,\infty}(\ell)\right)^{1/4}\nonumber\\
&\le\dfrac{CT_n^{d-\theta/4}}{H_n^dU_n }\left[\sup_{s\in[0,1]^d}\E\left[\left(r(s)-\widehat r(s)\right)^2\right]\right]^{1/2}\nonumber\\
&\le\dfrac{C\left(H_nU_n\right)^{(\theta-4d)/(4d)}}{H_n^dU_n }\left(O\left(H_n^4 \right)+O\left(\dfrac{(\log N_n)^2}{\omega_n^2U_n^2N_n}\right)+O\left(U_n^2\right)+O\left(\dfrac{\log N_n}{N_nH_n^d}\right)\right)^{1/2}\nonumber\\
&\le CU_n^{(\theta-8d)/(4d)}H_n^{1-d+\theta/(4d)}\left(O\left(1 \right)+O\left(\dfrac{(\log N_n)^2}{\omega_n^2U_n^2N_nH_n^4}\right) + O\left(\dfrac{\log N_n}{N_nH_n^{d+4}}\right)\right)^{1/2}\nonumber\\
&=o(1)
\end{align}
Equations \eqref{import7}, \eqref{import8} and \eqref{import9} imply that the trend variance vanishes: 
\begin{align*}
Var\left(\dfrac{1}{\sqrt{N_n}}\sum_{\bm i\in\Ical_{\bm n}}\left(r(s_{\bm i})-\widehat r(s_{\bm i})\right) \bm b^\top X_{\bm i,\widehat\tau}\right)=o(1).
\end{align*}
which establishes Lemma \ref{lem:variance}.  
\end{proof}

\subsection{Proof of Theorem~\ref{thm:resid} (Residualization Statement)}
\begin{proof}
By expanding the residualized point-evaluation score $\widehat{K}_n$, we obtain:
\[
\widehat K_n
=
\frac1{N_n}\sum_{\bm i\in\Ical_{\bm n}}
\left(\etaV^\top X_{\bm i,\tau}+L_{\bm i,n}\right)
X_{\bm i,\widehat\tau}.
\]
Therefore,
\begin{align}
\label{eq:residproof1}
&\widehat K_n
=
\left\{
\frac1{N_n}\sum_{\bm i\in\Ical_{\bm n}}
X_{\bm i,\widehat\tau}X_{\bm i,\tau}^{\top}
\right\}\etaV
+
\frac1{N_n}\sum_{\bm i\in\Ical_{\bm n}}
L_{\bm i,n}X_{\bm i,\widehat\tau}\nonumber\\
&=\left\{
\frac1{N_n}\sum_{\bm i\in\Ical_{\bm n}}
X_{\bm i,\widehat\tau}X_{\bm i,\tau}^{\top}
\right\}\etaV + \frac1{N_n}\sum_{\bm i\in\Ical_{\bm n}}
\xi_{\bm i}X_{\bm i,\tau} + \frac1{N_n}\sum_{\bm i\in\Ical_{\bm n}}
\xi_{\bm i}\left(X_{\bm i,\widehat\tau}-X_{\bm i,\tau}\right) + \frac1{N_n}\sum_{\bm i\in\Ical_{\bm n}}
L^{0}_{\bm i,n}X_{\bm i,\widehat\tau}
\end{align}
because $L_{\bm i,n}=\xi_{\bm i} + L^{0}_{\bm i,n}$. Since, by Lemma~\ref{lem:theta} and the corresponding cross-covariance consistency,
\[
\frac1{N_n}\sum_{\bm i\in\Ical_{\bm n}}
X_{\bm i,\widehat\tau}X_{\bm i,\tau}^{\top}
\stackrel{\mathbb{P}}{\longrightarrow}\Thetaop_\tau.
\]
Lemma~3.1 in \citet{bouka5} guarantees that:
\[\mathbb{P}\left(\norm{\frac1{N_n}\sum_{\bm i\in\Ical_{\bm n}}\xi_{\bm i}X_{\bm i,\tau}}_{2}>\zeta\right)\le\mathbb{P}\left(\sqrt{A}\norm{\frac1{N_n}\sum_{\bm i\in\Ical_{\bm n}}\xi_{\bm i}X_{\bm i,\tau}}_{\infty}>\zeta\right)=O\left(\dfrac{\log N_n}{N_n}\right)=o(1)\]
Assumption \ref{as:impact} states that for all $\ell=1,\cdots,A$
\begin{align}\label{ajust1}
\frac1{N_n}\sum_{\bm i\in\Ical_{\bm n}}\xi_{\bm i}\left(X_{\bm i}(\widehat\tau_\ell) - X_{\bm i}(\tau_\ell)\right)=O_p\left(\frac1{N_n}\right)=o_p\left(\frac1{\sqrt{N_n}}\right)
\end{align}
and from Lemma \ref{lem:variance}, we have
\begin{align}\label{ajust2}
\frac1{N_n}\sum_{\bm i\in\Ical_{\bm n}}
L^{0}_{\bm i,n}X_{\bm i,\widehat\tau}=o_p\left(\frac1{\sqrt{N_n}}\right),\end{align}
then, \eqref{eq:residproof1} becomes
\[
\widehat K_n=\Thetaop_\tau\etaV+o_p(1).
\]
Under $H_0$, $\etaV=0$, and the stronger score-scale conditions obtained in \eqref{ajust1} and \eqref{ajust2} give
\begin{equation*}
\label{eq:scoreop}
\frac1{\sqrt{N_n}}
\sum_{\bm i\in\Ical_{\bm n}}
\xi_{\bm i}\left(X_{\bm i,\widehat\tau} - X_{\bm i,\tau}\right) + \frac1{\sqrt{N_n}}
\sum_{\bm i\in\Ical_{\bm n}}
L^{0}_{\bm i,n}X_{\bm i,\widehat\tau}=o_p(1),
\end{equation*}
where
\[L^{0}_{\bm i,n}=\langle\Psi-\widehat\Psi_n,X_{\bm i}\rangle_G
+r(s_{\bm i})-\widehat r(s_{\bm i}).\]
This proves the theorem.
\end{proof}

\subsection{Proof of Theorem~\ref{thm:clt} (Residualized-Score Central Limit Theorem)}
\begin{proof}
Under $H_0$, Theorem~\ref{thm:resid}, when applied to the expansion of the principal score field \eqref{eq:residproof1}, gives
\begin{equation}
\label{eq:leadingfield}
\sqrt{N_n}\,\widehat K_n
=
\frac1{\sqrt{N_n}}
\sum_{\bm i\in\Ical_{\bm n}}Z_{\bm i}
+o_p(1),
\qquad
Z_{\bm i}=\xi_{\bm i}X_{\bm i,\tau}.
\end{equation}
For any arbitrary vector $a \in \mathbb{R}^A$, we project this field into the scalar process
\[
Z_{\bm i}^{(a)}=a^\top Z_{\bm i}.
\]
By Assumption~\ref{as:mixing}, this field is strictly stationary and satisfies the required spatial mixing conditions. Moreover,
\[
\E|Z_{\bm0}^{(a)}|^8<\infty.
\]
Thus the moment condition of a standard stationary random-field central limit theorem \citep{Bolthausen1982} is satisfied with $2+\delta=8$, that is, $\delta=6$. The mixing summability conditions in Assumption~\ref{as:mixing} ensure the required summability of the relevant mixing coefficients. Consequently,
\begin{equation}
\label{eq:scalarclt}
\frac1{\sqrt{N_n}}
\sum_{\bm i\in\Ical_{\bm n}}
Z_{\bm i}^{(a)}
\xrightarrow{\mathcal D}
\mathcal{N}(0,\sigma_a^2),
\end{equation}
where
\[
\sigma_a^2
=\sum_{\bm h\in\mathbb Z^d}
\Cov\left(Z_{\bm0}^{(a)},Z_{\bm h}^{(a)}\right).
\]
Since $Z_{\bm i}^{(a)}=a^\top Z_{\bm i}$,
\[
\sigma_a^2
=
\sum_{\bm h\in\mathbb Z^d}
a^\top\Cov(Z_{\bm0},Z_{\bm h})a
=a^\top\Omega_\tau a.
\]
Hence, for every $a\in\R^A$,
\[
\frac1{\sqrt{N_n}}
\sum_{\bm i\in\Ical_{\bm n}}
a^\top Z_{\bm i}
\xrightarrow{\mathcal D}
\mathcal{N}(0,a^\top\Omega_\tau a).
\]
The Cramér--Wold device now yields
\[
\frac1{\sqrt{N_n}}
\sum_{\bm i\in\Ical_{\bm n}}Z_{\bm i}
\xrightarrow{\mathcal D}
\mathcal{N}_A(0,\Omega_\tau).
\]
Finally, Slutsky's theorem and \eqref{eq:leadingfield} give
\[
\sqrt{N_n}\,\widehat K_n
\stackrel{\mathcal D}{\longrightarrow}
\mathcal{N}_A(0,\Omega_\tau).
\]
\end{proof}

\subsection{Proof of Corollary~\ref{cor:lr}}
\begin{proof}
Let
\[
V_n=\sqrt{N_n}\,\widehat K_n.
\]
By Theorem~\ref{thm:clt}, $V_n\stackrel{\mathcal{D}}{\longrightarrow} \mathcal{N}_A(0,\Omega_\tau)$. Since $\Omega_\tau$ is positive definite and $\widehat\Omega_n\stackrel{\mathbb{P}}{\longrightarrow}\Omega_\tau$ in operator norm,
\[
\widehat\Omega_n^{-1}\stackrel{\mathbb{P}}{\longrightarrow}\Omega_\tau^{-1}.
\]
Therefore,
\[
D_n^{\mathrm{LR}}
=V_n^\top\widehat\Omega_n^{-1}V_n
\xrightarrow{\mathcal D}
V^\top\Omega_\tau^{-1}V,
\]
where $V\sim \mathcal{N}_A(0,\Omega_\tau)$. Writing $V=\Omega_\tau^{1/2}U$ with $U\sim \mathcal{N}_A(0,I_A)$ gives
\[
V^\top\Omega_\tau^{-1}V=U^\top U\sim\chi_A^2.
\]
\end{proof}

\subsection{Proof of Corollary~\ref{cor:simple}}
\begin{proof}
Under \eqref{eq:special},
\[
\Omega_\tau=\sigma^2\Thetaop_\tau.
\]
By Lemma~\ref{lem:theta},
\[
\widehat\Theta_{\widehat\tau,n}^{-1}
\stackrel{\mathbb{P}}{\longrightarrow}
\Thetaop_\tau^{-1},
\]
while $\widehat\sigma_n^2\stackrel{\mathbb{P}}{\longrightarrow}\sigma^2$. Therefore,
\[
D_n
=
\frac{N_n}{\widehat\sigma_n^2}
\widehat K_n^\top
\widehat\Theta_{\widehat\tau,n}^{-1}
\widehat K_n
\xrightarrow{\mathcal D}
V^\top(\sigma^2\Thetaop_\tau)^{-1}V,
\]
where $V\sim \mathcal N_A(0,\sigma^2\Thetaop_\tau)$. The limiting quadratic form is $\chi_A^2$.
\end{proof}

\subsection{Proof of Proposition~\ref{prop:standard}}
\begin{proof}
By Corollary~\ref{cor:simple}, $D_n\stackrel{\mathcal{D}}{\longrightarrow}\chi_A^2$. Since the map $x\mapsto(x-A)/\sqrt A$ is continuous,
\[
S_n=\frac{D_n-A}{\sqrt A}
\stackrel{\mathcal{D}}{\longrightarrow}
\frac{\chi_A^2-A}{\sqrt A}.
\]
For fixed $A$, the chi-square distribution has nonzero skewness and is not Gaussian. Hence the limiting distribution is not $\mathcal N(0,1)$.
\end{proof}

\subsection{Proof of Proposition~\ref{prop:consistency}}
\begin{proof}
Under a fixed alternative, Theorem~\ref{thm:resid} gives
\[
\widehat K_n=\Thetaop_\tau\etaV+o_p(1).
\]
Since $\Thetaop_\tau$ is positive definite and $\etaV\ne0$,
\[
\Thetaop_\tau\etaV\ne0.
\]
Hence there exists $c>0$ such that, with probability tending to one,
\[
\norm{\widehat K_n}\ge c.
\]
If $\widehat\Omega_n\stackrel{\mathbb{P}}{\longrightarrow}\Omega_\tau$ with $\Omega_\tau$ positive definite, its smallest eigenvalue is bounded away from zero with probability tending to one. Consequently,
\[
D_n^{\mathrm{LR}}
=N_n\widehat K_n^\top\widehat\Omega_n^{-1}\widehat K_n
\ge c_1N_n
\]
with probability tending to one for some $c_1>0$. Thus
\[
D_n^{\mathrm{LR}}\stackrel{\mathbb{P}}{\longrightarrow}+\infty.
\]
The simplified statistic follows identically under \eqref{eq:special}.
\end{proof}

\end{appendix}

\section*{Disclosure statement}

No potential conflict of interest was reported by the authors.

\section*{Funding}

No funding was received



\begin{thebibliography}{99}

\bibitem[Beyaztas et al.(2026)]{beyaztas26} 
Beyaztas, U., Mandal, A., Shang, H. L. (2026). Enhancing spatial functional linear regression with robust dimension reduction methods. \emph{Journal of Multivariate Analysis}, 211, Article 105538. DOI: \url{https://doi.org/10.1016/j.jmva.2025.105538}


\bibitem[Bosq(2000)]{bosq}
Bosq, D. (2000).
\emph{Linear Processes in Function Spaces: Theory and Applications}.
Springer, New York.

\bibitem[Bolthausen(1982)]{Bolthausen1982}
Bolthausen, E. (1982).
On the central limit theorem for stationary mixing random fields.
\emph{The Annals of Probability}, 10(4), 1047--1050.

\bibitem[Bouka et al.(2023)]{boukaetal23}
Bouka, S., Dabo-Niang, S., and Nkiet, G. M. (2023).
On estimation and prediction in spatial functional linear regression model.
\emph{Lithuanian Mathematical Journal}, 63, 13--30.

\bibitem[Bouka et al.(2024a)]{bouka4}
Bouka, S., Pambo Bello, K., and Nkiet, G. M. (2024a).
Testing for no effect in the spatial functional linear regression model.
\emph{South African Statistical Journal}, 58, 1--18.

\bibitem[Bouka et al.(2024b)]{bouka5}
Bouka, S., Pambo-Bello, K., Nkiet, G.M. (2024b).
On estimation and prediction in a spatial semi-functional linear regression model with derivatives.
\emph{Mathematical Methods of Statistics}, 33, 310--326.

\bibitem[Cardot et al.(2003)]{cardot}
Cardot, H., Ferraty, F., Mas, A., and Sarda, P. (2003).
Testing hypotheses in the functional linear model.
\emph{Scandinavian Journal of Statistics}, 30, 241--255.

\bibitem[Dabo-Niang et al.(2016)]{daboniangetal16}
Dabo-Niang, S., Ternynck, C., and Yao, A.-F. (2016).
Nonparametric prediction of spatial multivariate data.
\emph{Journal of Nonparametric Statistics}, 28, 428--458.

\bibitem[Fan and Gijbels(1996)]{fan_gijbels}
Fan, J. and Gijbels, I. (1996).
\emph{Local Polynomial Modelling and Its Applications}.
Chapman \& Hall, London.

\bibitem[Francisco-Fernandez and Opsomer (2005)]{francisco05}
Francisco-Fernandez, M., Opsomer, J. D. (2005). Smoothing parameter selection methods for nonparametric regression with spatially correlated errors. \emph{Canadian Journal of Statistics}, 33, 279–-295.

\bibitem[Giraldo et al.(2012)]{giraldoetal12}
Giraldo, R., Mateu, J., and Delicado, P. (2012).
geofd: An R package for function-valued geostatistical prediction.
\emph{Revista Colombiana de Estadística}, 35, 385--407.

\bibitem[Giraldo et al.(2018)]{giraldoetal18}
Giraldo, R., Dabo-Niang, S., and Martínez, S. (2018).
Statistical modeling of spatial big data: An approach from a functional data analysis perspective.
\emph{Statistics \& Probability Letters}, 136, 126--129.

\bibitem[Hallin et al.(2004)]{hallinetal04}
Hallin, M., Lu, Z., and Tran, L. T. (2004).
Local linear spatial regression.
\emph{The Annals of Statistics}, 32, 2469--2500.

\bibitem[Liebl et al.(2020)]{liebletal}
Liebl, D., Rameseder, S., and Rust, C. (2020).
Improving estimation in functional linear regression with points of impact: Insights from Google AdWords.
\emph{Journal of Computational and Graphical Statistics}, 29, 814--826.

\bibitem[Mas and Pumo(2009)]{mas_pumo09}
Mas, A. and Pumo, B. (2009).
Functional linear regression with derivatives.
\emph{Journal of Nonparametric Statistics}, 21, 19--40.

\bibitem[Po$\beta$ et al.(2020)]{Poss2020}
Po$\beta$, D., Liebl, D., Kneip, A., Eisenbarth, H., Wager, T. D. and Feldman Barrett, L. (2020). Super-consistent estimation of points of impact in nonparametric regression with functional predictors. \emph{Journal of the Royal Statistical Society: Series B}, 82, 1115--1140.

\bibitem[Ramsay and Silverman(2005)]{ramsay}
Ramsay, J. O. and Silverman, B. W. (2005).
\emph{Functional Data Analysis}, 2nd ed.
Springer, New York.

\bibitem[Shirvani et al.(2024)]{shirvanietal2024}
Shirvani, A., Khademnoe, O., and Hosseini-Nasab, M. (2024).
Hypothesis testing for points of impact in functional linear regression.
\emph{Computational and Applied Mathematics}, 43, 201.
DOI: \url{https://doi.org/10.1007/s40314-024-02723-5}

\bibitem[Tran(1990)]{tran}
Tran, L. T. (1990).
Kernel density estimation on random fields.
\emph{Journal of Multivariate Analysis}, 34, 37--53.

\end{thebibliography}
\end{document}